\documentclass[final,5p,times]{elsarticle}

\usepackage{amsmath}
\usepackage{amssymb}
\usepackage{mathtools}
\usepackage{amsthm}
\usepackage{aliascnt}
\usepackage{xcolor}
\usepackage{tikz}
\usetikzlibrary{matrix,decorations.pathreplacing}
\usepackage{quantikz}          
\usepackage{algorithm}
\usepackage{algpseudocode}
\usepackage{adjustbox}
\usepackage{float}
\usepackage{subcaption}
\usepackage{microtype}         
\usepackage{hyperref}
\usepackage{orcidlink}
\usepackage{cleveref}          
\usepackage{comment}
\theoremstyle{plain}
\newtheorem{theorem}{Theorem}
\newaliascnt{corollary}{theorem}
\newtheorem{corollary}[corollary]{Corollary}
\aliascntresetthe{corollary}
\theoremstyle{definition}
\newaliascnt{definition}{theorem}
\newtheorem{definition}[definition]{Definition}
\aliascntresetthe{definition}

\definecolor{matrixblue}{RGB}{0, 90, 180}
\definecolor{matrixfill}{RGB}{225, 235, 245}
\definecolor{matrixred}{RGB}{200, 40, 40}
\definecolor{matrixfillred}{RGB}{255, 240, 240}
\definecolor{matrixgreen}{RGB}{40, 150, 40}
\definecolor{matrixfillgreen}{RGB}{240, 255, 240}

\providecommand{\ket}[1]{\lvert #1 \rangle}
\providecommand{\bra}[1]{\langle #1 \rvert}
\providecommand{\braket}[2]{\langle #1 \mid #2 \rangle}
\newcommand{\norm}[1]{\left\| #1 \right\|}

\newcommand{\SP}{\mathcal{SP}}        
\newcommand{\MSP}{\mathcal{MSP}}      
\newcommand{\OP}{\mathcal{OP}}        
\newcommand{\Id}{\mathbb{I}}          

\journal{Future Generation Computer Systems}

\begin{document}

\begin{frontmatter}

\title{Two-Tower Quantum Matrix Chain Multiplication: Trading Qubits for Depth}

\author[csunipi]{Giacomo Antonioli\,\orcidlink{0009-0000-6687-0357}}
\ead{giacomo.antonioli@phd.unipi.it}

\author[csunipi]{Anna Bernasconi\,\orcidlink{0000-0003-0263-5221}}
\ead{anna.bernasconi@unipi.it}

\author[csunipi]{Alessandro Berti\,\orcidlink{0000-0001-9144-9572}\corref{cor1}}
\ead{alessandro.berti@df.unipi.it}

\author[csunipi]{Gianna M. Del Corso\,\orcidlink{0000-0002-5651-9368}}
\ead{gianna.delcorso@unipi.it}

\author[csunipi]{Alessandro Poggiali\,\orcidlink{0000-0002-1591-7925}}
\ead{alessandro.poggiali@di.unipi.it}

\cortext[cor1]{Corresponding author}

\affiliation[csunipi]{organization={Department of Computer Science, University of Pisa},
            city={Pisa},
            country={Italy}}

\begin{abstract}
Matrix chain multiplication --- computing $\mathcal{W} = M^{(0)}\cdots M^{(K-1)}$ where $M^{(k)} \in \mathbb{R}^{P_k \times P_{k+1}}$  --- arises in scientific computing, machine learning, and graph analysis. Despite the importance of this problem, for chains of distinct matrices, the classical number of operations grows linearly with the chain length $K$ and polynomially in the matrix dimensions. We present \emph{Two-Tower Matrix Multiplication}, a quantum subroutine that encodes the product $\mathcal{W}$ of the $K$ matrices  into a quantum state in circuit depth $\mathcal{O}(\max_{k} \mathrm{polylog} (P_k P_{k+1}))$, which is independent of~$K$ within the QRAM-based state-preparation model, whereas the qubit count is $\mathcal{O}\bigl(\sum_{k} \log P_k \bigr)$; the total gate count remains linear in $K$, so the gain is in the circuit depth. The construction interleaves state-preparation operators across two layers; within each layer, all operators act on disjoint registers and execute in parallel. This subroutine can be specialized for the chain-vector case, which computes the product of $K-1$ matrices applied to a vector. We prove the correctness of the subroutine for all $K$ and provide two implementations using the Qiskit and QCLAB frameworks. The subroutine is applicable to any downstream quantum algorithm that operates on a matrix encoded in the statevector, including norm estimation, graph-matrix powers, linear system solving, and quantum machine learning kernels.
\end{abstract}

\begin{keyword}
quantum algorithms \sep matrix multiplication \sep state preparation \sep quantum subroutines \sep quantum linear algebra
\end{keyword}

\end{frontmatter}

\section{Introduction}
\label{sec:intro}
Matrix chain multiplication (i.e., computing
$\mathcal{W} = M^{(0)} \cdots M^{(K-1)}$ for $K$ matrices) arises in different contexts, such as
iterative solvers~\cite{GolubVanLoan1996}, deep feature maps~\cite{biamonte2017quantum}, and graph spectral
analysis~\cite{janzing610235bqp,nghiem2023quantum}.
Classically, each $N\!\times\! N$ product costs $O(N^\omega)$ ($\omega < 2.372$, due to a sequence of
improvements~\cite{Bini1979,alman2021refined,duan2023faster}), so a chain of $K$ distinct matrices requires $O(K N^\omega)$ operations. Even with unbounded parallelism (tree-structured multiplication), the depth grows as $\mathcal{O}(\log K \cdot \log N)$~\cite{GolubVanLoan1996}. Quantum linear algebra subroutines~\cite{harrow2009quantum,PhysRevLett.120.050502,biamonte2017quantum} typically promise speedups provided that an efficient state-preparation subroutine is available to encode the data into quantum states.
Two encoding models are relevant: \emph{statevector encoding}, where
matrix entries appear as amplitudes, and {block-encoding}, where the matrix is the top-left block of a unitary. Block-encoding approaches
typically have a depth that grows with $K$~\cite{chakraborty_et_al:LIPIcs.ICALP.2019.33}. In a straightforward implementation, the number of ancilla qubits grows at least linearly with $K$ because each block-encoded matrix requires its own ancilla register. The overall subnormalization factor is $\prod_{k=0}^{K-1} \alpha_k$, where $\alpha_k$ is the subnormalization of the block encoding of $M^{(k)}$ and $\alpha_k\ge \|M^{(k)}\|_2$. When this factor is large, the success probability decreases, and the computation becomes more sensitive to errors. To reduce the ancilla overhead we can apply the {compression gadget}~\cite{Fang_2023}, which lowers the qubit count to $\max_k a_k + \log(K) + 1$, where $a_k$ is the number of ancillae required to block-encode $M^{(k)}$, at the expense of increased circuit depth. Assuming efficient state-preparation, each block encoding has depth $\mathcal{O}(\mathrm{polylog}(N))$, the $K$ sequential applications yield a total circuit depth of $\mathcal{O}(K\,\mathrm{polylog}(N))$; the overall subnormalization factor however, remains unchanged. Li~et~al.~\cite{li2025faster} address the special case of single repeated matrix $A$ applied to a vector $b$, i.e., computing $A^K b$, in the block-encoding model achieving
$\Theta(\sqrt{K})$ queries via Chebyshev approximation and QSVT~\cite{gilyen2019quantum}; no analogous speedup is known for chains of {distinct} matrices via this approach.
Montanaro and Shao~\cite{montanaro2024quantum} show that for $f(x)=x^K$, the quantum query complexity of approximating an entry  $\bra{i}A^K\ket{j}$ of an $s$-sparse Hermitian matrix is $\mathcal{O}(\sqrt{K})$. This yields an exponential separation from the classical lower bound $\widetilde\Omega\!\left((s/2)^{(\sqrt{k}-1)/6}\right)$, and implying optimality of QSVT-based algorithms for matrix powers. In contrast, our setting is more general: the chain product $\mathcal{W}$ consists of matrices differing in both values and dimensions, and the  QSVT framework does not directly apply. Crucially, none of the approaches above achieves circuit depth independent of $K$ for chains of distinct matrices.

In this work, we present the Two-Tower Matrix Multiplication, a quantum subroutine that is build on the two-matrix product of~\cite{bernasconiMatMul} to chains of arbitrary length $K$. Given matrices $M^{(k)} \in \mathbb{R}^{P_k \times P_{k+1}}$ for $0 \leq k < K$, the subroutine encodes the chain product as amplitudes in circuit depth $\mathcal{O}(\max_k \mathrm{polylog}(P_k P_{k+1}))$, independent of $K$, using $\mathcal{O}\bigl(\sum_{k} \log P_k \bigr)$ qubits. We emphasize that this polylogarithmic circuit depth is achieved only when each matrix $M^{(k)}$ can itself be loaded into the quantum state in polylogarithmic time: for instance, when the entries of $M^{(k)}$  are stored in a quantum-accessible data structure such as a QRAM \cite{giovannetti2008quantum}, which supports the preparation of the corresponding quantum state in $\mathcal{O}(\operatorname{polylog}(P_k P_{k+1}))$ gates. 

Our subroutine produces an output state whose amplitudes encode the entries of the chain product $\mathcal{W}$ on a specific subspace; the remaining amplitudes are spurious terms that do not contribute to the result. The signal weight $\Omega_{\mathcal{W}} = \frac{\|\mathcal{W}\|_F^2}{ \prod_{k=0}^{K-1} \|M^{(k)}\|_F^2}$, $\Omega_{\mathcal{W}} \leq 1$, measures the total squared amplitude carried by the useful components. For well-conditioned matrices $\Omega_{\mathcal{W}}$ decays with $K$, whereas matrices with peaked spectral structure retain $\Omega_{\mathcal{W}}$ close to unity; in either case, Amplitude Amplification boosts $\Omega_{\mathcal{W}}$ to any desired constant at an additional cost of $\mathcal{O}(1/\sqrt{\Omega_{\mathcal{W}}})$ repetitions. This quantity plays the same role as the subnormalization factor $\prod_k \alpha_k$ in block-encoding approaches and subnormalization affects all quantum linear algebra subroutines, not just the Two-Tower.
Two observations inspire the proposed subroutine. (1)~Encoding the entries of a matrix as amplitudes of a quantum state is equivalent to mapping it to its row-wise vectorization.
(2)~Classical matrix theory expresses the vectorization of a three-factor product $AXB$ through the Kronecker product: $\mathrm{vec}(AXB) = (B^T \otimes A)\,\mathrm{vec}(X)$~\cite{horn2012matrix}, and this identity extends recursively to longer chains. The Two-Tower subroutine draws from this decomposition: preparing the first and last matrices of the chain, $M^{(0)}$ and $M^{(K-1)}$, on separate registers realises their Kronecker product naturally within the quantum circuit, while a contraction mechanism pairs matching indices from adjacent matrices and accumulates their products through quantum superposition, yielding the inner sums that define the chain product. The final quantum state contains the vectorization of the chain product. 

Amplitude-based quantum algorithms can leverage the Two-Tower output state directly~\cite{antonioli2025outlier,bernasconi2024quantum,poggiali2023quantum,poggiali2026more}. We observe that, as with other quantum routines such as the Quantum Fourier Transform~\cite{nielsen2010quantum}, extracting all entries by tomography negates the advantage; the subroutine is beneficial when the downstream computation requires only aggregate quantities (norms, traces, inner products) extractable via Amplitude Estimation~\cite{harrow2009quantum,PhysRevLett.120.050502}.
\begin{table}[t]
  \caption{Classical vs.\ quantum cost for the chain product of $K$ distinct square $N \times N$ matrices ($\omega < 2.372$). The classical column counts arithmetic operations; the Two-Tower column counts circuit depth. The ordering cost is the cost of finding the optimal parenthesisation of the product. The qubit count refers to circuit registers only.}
  \label{tab:complexity}
  \centering
  \begin{tabular}{lll}
    \hline
    & \textbf{Classical} & \textbf{Two-Tower} \\
    \hline
    Ordering cost          & $O(K^3)$ arith.\ ops      & not needed \\
    Execution (seq.\ depth) & $O(K \, N^\omega)$      & $\mathcal{O}({\mathrm{ polylog}} (N))$ \\
    Memory / qubits        & $O(K \, N^2)$ words     & $\Theta(K \log N)$ qubits \\
    \hline
  \end{tabular}
\end{table}
\Cref{tab:complexity} summarises the classical and quantum costs for a chain of $K$ square $N \times N$ matrices; the general rectangular case follows by replacing $N^2$ with $P_k P_{k+1}$. The qubit count $\Theta(K \log N)$ grows linearly with $K$, an exponential compression over the classical memory $O(K N^2)$, but not $K$-free. In general, the Two-Tower trades circuit depth for qubit count. We stress that this is a {depth} result: the circuit still applies $K$ state-preparation operators, so the total gate count remains $\Theta(K\,\mathrm{polylog}(N))$.

\newpage
The contributions of this work are:
\begin{itemize}
\item (1)~a subroutine with circuit depth $\mathcal{O}(\max_k \mathrm{polylog}(P_k P_{k+1}))$, independent of $K$, improving over all prior methods whose depth depends on $K$;
\item (2)~a correctness proof for arbitrary $K$;
\item (3)~a chain-vector specialisation for the case $P_K = 1$;
\item (4)~open-source Qiskit~\cite{qiskit2024} and QCLAB~\cite{keip2025qclab} implementations\footnote{\url{https://github.com/Brotherhood94/two_tower_quantum_matrix_multiplication_subroutine}}.
\end{itemize}

The remainder of this paper is organized as follows. \Cref{sec:prelim} fixes notation, recalls the QRAM-based state-preparation model, and defines the matrix chain problem. \Cref{sec:algorithm} presents the Two-Tower circuit: \Cref{sec:two_matrix} reviews the two-matrix base case~\cite{bernasconiMatMul} and its rearrangement into the two-layer structure; \Cref{sec:chain_mult} generalises to arbitrary $K$, formalises the register layout and the operators $\OP(L)$, $\OP(R)$, and provides an intuition for $K=3$ walkthrough. \Cref{sec:analysis} proves correctness for odd and even $K$, derives the depth and qubit bounds, and analyses the normalisation factor and signal weight. \Cref{sec:vector} specialises the subroutine to matrix-chain-vector products. \Cref{sec:conclusions} summarises the contributions and identifies future work.

\section{Preliminaries}
\label{sec:prelim}

This section establishes notation, introduces the QRAM-based state-preparation model, reviews the two-matrix construction of~\cite{bernasconiMatMul} that the Two-Tower generalises, and defines the matrix chain multiplication problem. We recall the key concepts; for a full treatment see~\cite{nielsen2010quantum}.

\begin{figure*}[t]
  \centering
  \begin{minipage}[t]{0.33\linewidth}
    \centering
    \begin{tikzpicture}[
      sig/.style={draw=black!40, fill=matrixfill,
                  minimum width=0.844cm, minimum height=0.844cm,
                  inner sep=0pt, font=\small, anchor=center},
      irr/.style={draw=black!20, fill=black!8,
                  minimum width=0.844cm, minimum height=0.844cm,
                  inner sep=0pt, font=\small, anchor=center,
                  text=black!35},
      klbl/.style={font=\small, text=black},
    ]
    \def\sz{0.844}
    \foreach \r in {0,...,3}{
      \node[sig] (c\r-0) at (0, -\r*\sz) {};
      \foreach \c in {1,...,3}{
        \node[irr] (c\r-\c) at (\c*\sz, -\r*\sz) {$\ast$};
      }
    }
    \node[font=\small] at (c0-0) {$\tilde{m}_{0,0}$};
    \node[font=\small] at (c1-0) {$\tilde{m}_{0,1}$};
    \node[font=\small] at (c2-0) {$\tilde{m}_{1,0}$};
    \node[font=\small] at (c3-0) {$\tilde{m}_{1,1}$};
    \node[klbl, anchor=south] at ([yshift=2pt]c0-0.north) {$\ket{0}$};
    \node[klbl, anchor=south] at ([yshift=2pt]c0-1.north) {$\ket{1}$};
    \node[klbl, anchor=south] at ([yshift=2pt]c0-2.north) {$\ket{2}$};
    \node[klbl, anchor=south] at ([yshift=2pt]c0-3.north) {$\ket{3}$};
    \node[klbl, anchor=east] at ([xshift=-2pt]c0-0.west) {$\bra{0}$};
    \node[klbl, anchor=east] at ([xshift=-2pt]c1-0.west) {$\bra{1}$};
    \node[klbl, anchor=east] at ([xshift=-2pt]c2-0.west) {$\bra{2}$};
    \node[klbl, anchor=east] at ([xshift=-2pt]c3-0.west) {$\bra{3}$};
    \draw[matrixblue, thick]
      ([xshift=-0.5pt, yshift=0.5pt]c0-0.north west)
      rectangle
      ([xshift=0.5pt, yshift=-0.5pt]c3-0.south east);
    \end{tikzpicture}
    \subcaption{$\SP(M)$ for $M\in\mathbb{R}^{2\times 2}$ where $\tilde{m}_{i,j} = \frac{m_{i,j}}{\|M\|_F}$.}
    \label{fig:viz-sp}
  \end{minipage}
  \hfill
  \begin{minipage}[t]{0.33\linewidth}
    \centering
    \begin{tikzpicture}[
      sig/.style={draw=black!40, fill=matrixfillred,
                  minimum width=0.844cm, minimum height=0.844cm,
                  inner sep=0pt, font=\small, anchor=center},
      irr/.style={draw=black!20, fill=black!8,
                  minimum width=0.844cm, minimum height=0.844cm,
                  inner sep=0pt, font=\small, anchor=center,
                  text=black!35},
      klbl/.style={font=\small, text=black},
    ]
    \def\sz{0.844}
    \foreach \c in {0,...,3}{
      \node[sig] (d0-\c) at (\c*\sz, 0) {};
      \foreach \r in {1,...,3}{
        \node[irr] (d\r-\c) at (\c*\sz, -\r*\sz) {$\ast$};
      }
    }
    \node[font=\small] at (d0-0) {$\tilde{m}_{0,0}$};
    \node[font=\small] at (d0-1) {$\tilde{m}_{0,1}$};
    \node[font=\small] at (d0-2) {$\tilde{m}_{1,0}$};
    \node[font=\small] at (d0-3) {$\tilde{m}_{1,1}$};
    \node[klbl, anchor=south] at ([yshift=2pt]d0-0.north) {$\ket{0}$};
    \node[klbl, anchor=south] at ([yshift=2pt]d0-1.north) {$\ket{1}$};
    \node[klbl, anchor=south] at ([yshift=2pt]d0-2.north) {$\ket{2}$};
    \node[klbl, anchor=south] at ([yshift=2pt]d0-3.north) {$\ket{3}$};
    \node[klbl, anchor=east] at ([xshift=-2pt]d0-0.west) {$\bra{0}$};
    \node[klbl, anchor=east] at ([xshift=-2pt]d1-0.west) {$\bra{1}$};
    \node[klbl, anchor=east] at ([xshift=-2pt]d2-0.west) {$\bra{2}$};
    \node[klbl, anchor=east] at ([xshift=-2pt]d3-0.west) {$\bra{3}$};
    \draw[matrixred, thick]
      ([xshift=-0.5pt, yshift=0.5pt]d0-0.north west)
      rectangle
      ([xshift=0.5pt, yshift=-0.5pt]d0-3.south east);
    \end{tikzpicture}
    \subcaption{$\SP(M)^\dagger$ for $M\in\mathbb{R}^{2\times 2}$ where $\tilde{m}_{i,j} = \frac{m_{i,j}}{\|M\|_F}$.}
  \end{minipage}
  \hfill
  \begin{minipage}[t]{0.33\linewidth}
    \centering
    \begin{tikzpicture}[
      sig/.style={draw=black!40, fill=matrixfillgreen,
                  minimum width=0.844cm, minimum height=0.844cm,
                  inner sep=0pt, font=\small, anchor=center},
      irr/.style={draw=black!20, fill=black!8,
                  minimum width=0.844cm, minimum height=0.844cm,
                  inner sep=0pt, font=\small, anchor=center,
                  text=black!35},
      klbl/.style={font=\small, text=black},
    ]
    \def\sz{0.844}
    \node[sig] (d0-0) at (0, -0*\sz){};
    \node[sig] (d1-0) at (0, -1*\sz){};
    \node[sig] (d0-2) at (2*\sz, -0*\sz){};
    \node[sig] (d1-2) at (2*\sz, -1*\sz){};

    \foreach \c in {1,3}{
        \node[irr] (d0-\c) at (\c*\sz, -0*\sz) {$\ast$};
        \node[irr] (d1-\c) at (\c*\sz, -1*\sz) {$\ast$};
        \node[irr] (d2-\c) at (\c*\sz, -2*\sz) {$\ast$};
        \node[irr] (d3-\c) at (\c*\sz, -3*\sz) {$\ast$};
      }
    \node[irr] (d2-0) at (0*\sz, -2*\sz) {$\ast$};
     \node[irr] (d3-0) at (0*\sz, -3*\sz) {$\ast$};
     \node[irr] (d2-2) at (2*\sz, -2*\sz) {$\ast$};
      \node[irr] (d2-3) at (2*\sz, -3*\sz) {$\ast$};

    \node[font=\small] at (d0-0) {$\tilde{m}_{0,0}$};
    \node[font=\small] at (d1-0) {$\tilde{m}_{0,1}$};
    \node[font=\small] at (d0-2) {$\tilde{m}_{1,0}$};
    \node[font=\small] at (d1-2) {$\tilde{m}_{1,1}$};
    \node[klbl, anchor=south] at ([yshift=2pt]d0-0.north) {$\ket{0}$};
    \node[klbl, anchor=south] at ([yshift=2pt]d0-1.north) {$\ket{1}$};
    \node[klbl, anchor=south] at ([yshift=2pt]d0-2.north) {$\ket{2}$};
    \node[klbl, anchor=south] at ([yshift=2pt]d0-3.north) {$\ket{3}$};
    \node[klbl, anchor=east] at ([xshift=-2pt]d0-0.west) {$\bra{0}$};
    \node[klbl, anchor=east] at ([xshift=-2pt]d1-0.west) {$\bra{1}$};
    \node[klbl, anchor=east] at ([xshift=-2pt]d2-0.west) {$\bra{2}$};
    \node[klbl, anchor=east] at ([xshift=-2pt]d3-0.west) {$\bra{3}$};
    \draw[matrixgreen, thick]
      ([xshift=-0.5pt, yshift=0.5pt]d0-0.north west)
      rectangle
      ([xshift=0.5pt, yshift=-0.5pt]d1-0.south east);
      \draw[matrixgreen, thick]
      ([xshift=-0.5pt, yshift=0.5pt]d0-2.north west)
      rectangle
      ([xshift=0.5pt, yshift=-0.5pt]d1-2.south east);
    \end{tikzpicture}
    \subcaption{$\MSP(M)$ for $M\in\mathbb{R}^{2\times 2}$ where $\tilde{m}_{i,j} = \frac{m_{i,j}}{\|M\|_F}$.}
  \end{minipage}
  \caption{Unitary matrices corresponding to
    \textbf{(a)}~$\SP(M)$,   \textbf{(b)}~$\SP(M)^\dagger$, and \textbf{(c)}~$\MSP(M)$
    shown schematically for $2\times 2$ matrix $M$.
    In $\SP(M)$, the first column (blue) contains the vectorised,
    Frobenius-normalised entries of $M$; the remaining columns (grey, marked $\ast$) are fixed by unitarity and play no role in the algorithm.
    In $\SP(M)^\dagger$, the first row (red) contains the normalised entries of $M$; the remaining rows are irrelevant. In $\MSP(M)$, the normalised entries (green) of $M$ retain the row structure of the original matrix.
    }
  \label{fig:viz}
\end{figure*}
\subsection{Notation}
\label{sec:prelim:notation}
An {$n$-qubit quantum state} is a unit vector $\ket{\psi} \in \mathbb{C}^{2^n}$ and $\ket{i}^n$ denotes the computational-basis state encoding $i \in [0, 2^n)$, with the superscript omitted when clear from context. An {$n$-qubit unitary} (also called a {quantum gate} or {quantum operator}) is a matrix $U \in \mathbb{C}^{2^n \times 2^n}$ satisfying $UU^\dagger = \Id_{2^n}$. We fix the notation $\ket{\psi}^{\mathrm{size}}$ where the superscript denotes the number of qubits and is omitted when clear from context.

For a matrix $A \in \mathbb{R}^{P \times R}$ with entries $a_{i,j}$, we write $\norm{A}_F = \sqrt{\sum_{i,j} a_{i,j}^2}$ for its Frobenius norm, and $\norm{A_i}$ for the Euclidean norm of the $i$-th row of $A$. We use lowercase letters for qubit counts: $p = \log P$, $r = \log R$, and more generally $p_k = \log P_k$, for a matrix chain. Throughout the paper, all matrices have real entries and all dimensions are assumed to be powers of two. 
We denote the conjugate transpose of a matrix or operator with the dagger symbol ($\dagger$), and we refer to it as the adjoint of the matrix or operator.
We now formalize the matrix chain multiplication problem.

\begin{definition}[Matrix Chain Product]\label{def:chain}
  Let $K \geq 2$.
  A {matrix chain} is a sequence of matrices
  $M^{(0)}, \ldots, M^{(K-1)}$ where each
  $M^{(k)} \in \mathbb{R}^{P_k \times P_{k+1}}$ with
  $P_k = 2^{p_k}$, 
  for $k = 0, \ldots, K-1$.
  Denoting by $m^{(k)}_{i,j}$ the $(i,j)$-th entry of $M^{(k)}$,
  the {chain product} $\mathcal{W} = M^{(0)}\cdots M^{(K-1)} \in
  \mathbb{R}^{P_0 \times P_K}$ has $(p,q)$-th entry $w_{p,q}$
  given by
\begin{equation*}
    w_{p,q} =
      \sum_{r_0,\ldots,r_{K-2}}
      \underbrace{m^{(0)}_{p,\,r_0}}_{\text{outer index }p}
      \underbrace{ m^{(1)}_{r_{0},\,r_1}m^{(2)}_{r_{1},\,r_2}\cdots m^{(K-2)}_{r_{K-3},\,r_{K-2}}}_{\text{shared indices}}
      \underbrace{m^{(K-1)}_{r_{K-2},\,q}}_{\text{outer index }q},
  \end{equation*}
  where $p \in [0,P_0)$ and $q \in [0,P_{K})$ are the fixed outer
  indices, and each shared index $r_j$ ranges over $[0, P_{j+1})$.
\end{definition}

The indices $r_0, \ldots, r_{K-2}$ are {shared indices}; they are
summed over in the chain product.
The outer indices $p \in [0, P_0)$ and $q \in [0, P_K)$ label the
entries of $\mathcal{W}$.

\subsection{State preparation}
\label{sec:prelim:sp}
To build quantum algorithms for linear algebra, one must first
efficiently load classical data, i.e., matrix entries,  into quantum states.
The dominant bottleneck is the {state-preparation step}: encoding
an $N$-dimensional real vector into the amplitudes of a
$\lceil\log N\rceil$-qubit state.
Throughout this work, state preparation is assumed to operate under the
{Quantum Random Access Memory} (QRAM) model
of~\cite{berti2025efficient, kerenidis_et_al:LIPIcs.ITCS.2017.49}. More precisely, a QRAM is a memory device that holds classical data and implements a mapping of the form:
\begin{equation}
    \mathsf{QRAM:}\ket{i}^{n} \ket{b}^{t} \mapsto  \ket{i}^{n} \ket{b \oplus x_i}^{t},
\label{eq:QRAM_access_transformation}
\end{equation}
where the register $\ket{i}^{n}$ (size $n$, $i \in \mathbb{N}$) holds the address and spans $N=2^n$ distinct locations, while $\ket{b \oplus x_i}^{t}$ is the data register: $b$ is an arbitrary bit string and $x_i \in \{0,1\}^t$ is the $t$-bit representation of the content stored at address $\ket{i}^n$. What distinguishes this memory model is that it acts on an address register prepared in superposition and returns the corresponding superposition of stored values. For state preparation, we store the vector (or matrix) entries in the quantum-accessible data structure of~\cite{kerenidis_et_al:LIPIcs.ITCS.2017.49}: a binary tree whose leaves hold the signed entries and whose internal nodes hold partial squared norms; a classical preprocessing step builds this tree once, costing $\mathcal{O}(PR)$ operations for a $P \times R$ matrix.
Under this model, a quantum state encoding an $N$-dimensional real vector can be prepared in depth $\mathcal{O}(\mathrm{polylog} (N))$ using
$\Theta(\log N)$ qubits.
Alternative data-loading models, such as block-encoding~\cite{chakraborty_et_al:LIPIcs.ICALP.2019.33}, produce a different data representation.
All complexity bounds in this paper are stated within the QRAM model. With the QRAM model in place, we now define the state-preparation
operator for a matrix.

\begin{definition}[State-Preparation Operator]\label{def:sp}
  Let $M \in \mathbb{R}^{P \times R}$ with $P = 2^p$ and $R = 2^r$.
  The \emph{state-preparation operator} $\SP(M)$ is a unitary on
  $p + r$ qubits that maps
  \begin{equation}\label{eq:sp}
    \SP(M)\colon \ket{0}^{p}\ket{0}^{r}
    \;\longmapsto\;
    \frac{1}{\norm{M}_F}
    \sum_{i=0}^{P-1}\sum_{j=0}^{R-1} m_{i,j}\,\ket{i}^{p}\ket{j}^{r}.
  \end{equation}
  The operator decomposes as $\SP(M) = U(M) V(M)$, where
  $V(M)$ prepares the row-weight state
  $\frac{1}{\norm{M}_F}\sum_{i=0}^{P-1} \norm{M_i}\,\ket{i}$, and $U(M)$ loads the normalised row
  $\frac{M_i}{\norm{M_i}}$~\cite{kerenidis_et_al:LIPIcs.ITCS.2017.49}.
  Under the QRAM model, $\SP(M)$ can be realised in depth $\mathcal{O}(\mathrm{polylog}(PR))$ using
  $\Theta(\log(PR))$ qubits.
  The adjoint satisfies
  $\SP(M)^\dagger = V(M)^\dagger  U(M)^\dagger$.
\end{definition}

\begin{figure*}[t]
  \centering
  \begin{minipage}[b]{0.48\linewidth}
    \centering
    \resizebox{\linewidth}{!}{%
    \begin{quantikz}[wire types={b,b,b}, classical gap=0.07cm]
      \lstick{$\ket{0}^{n}$}
        & \qw
        & \swap{1}
        & \qw
        & \qw
        & \swap{1}
        & \qw
        & \qw \\
      \lstick{$\ket{0}^{m}$}
        & \gate[wires=2]{V(A)}
        & \targX{}
        & \gate[wires=2]{V(B^T)}
        & \gate[wires=2]{U(B^T)}
        & \targX{}
        & \gate[wires=2]{U(A)^\dagger}
        & \qw \\
      \lstick{$\ket{0}^{s}$}
        & \qw
        & \qw
        & \qw
        & \qw
        & \qw
        & \qw
        & \qw
    \end{quantikz}}
    \subcaption{Original circuit from Bernasconi~et~al.~\cite{bernasconiMatMul}.}
    \label{fig:circuit-original}
  \end{minipage}
  \hfill
  \begin{minipage}[b]{0.48\linewidth}
    \centering
    \resizebox{0.59\linewidth}{!}{%
    \begin{quantikz}[wire types={b,b,b}, classical gap=0.07cm]
      \lstick{$\ket{0}^{s}$}
        & \qw\gategroup[3,steps=1,style={dotted, rounded corners, inner sep=1pt},label style={yshift=-3.8cm}]{$\OP(L)$}
        & \gate[wires=2]{\MSP(B)}\gategroup[3,steps=1,style={dotted, rounded corners, inner sep=1pt},label style={yshift=-3.8cm}]{$\OP(R)$}
        & \qw \\
      \lstick{$\ket{0}^{n}$}
        & \gate[wires=2]{\SP(A)}
        & \qw
        & \qw \\
      \lstick{$\ket{0}^{m}$}
        & \qw
        & \qw
        & \qw
    \end{quantikz}}
    \subcaption{Equivalent Two-Tower circuit, reorganised into layers $\OP(L)$ and $\OP(R)$.}
    \label{fig:circuit-2}
  \end{minipage}
  \caption{Two equivalent circuits for computing $C = AB$, with
    $A \in \mathbb{R}^{M \times N}$, $B \in \mathbb{R}^{N \times S}$,
    $m = \log M$, $n = \log N$, $s = \log S$.
    \textbf{(a)}~Original circuit from
    Bernasconi~et~al.~\cite{bernasconiMatMul}, using four
    state-preparation operators ($V(A)$, $U(A)^\dagger$, $V(B^T)$,
    $U(B^T)$) and performing two register swaps.
    \textbf{(b)}~The same circuit reorganised into the Two-Tower
    two-layer structure $\OP(L)$ and $\OP(R)$: layer~$\OP(L)$ applies
    $\SP(A) = U(A) \, V(A)$ on registers $(n,m)$; layer~$\OP(R)$
    applies $\MSP(B)$ on registers $(s,n)$, a composite operator that
    applies the row-loading multiplexer $U(B)$ on $(s,n)$ followed by
    $V(B)^\dagger$ on $n$ to contract the shared index. This
    reorganisation eliminates the register swaps and constitutes the
    base case ($K=2$) of the Two-Tower construction.}
  \label{fig:circuit-base}
\end{figure*}

When the chain length $K$ is even, the state-preparation operator of~\Cref{def:sp} alone does not suffice to encode the last matrix. We therefore introduce the following operator.

\begin{definition}[Modular-State-Preparation Operator]\label{def:MSP}
  Let $M \in \mathbb{R}^{P \times R}$ with $P = 2^p$ and $R = 2^r$. The \emph{Modular state-preparation operator} is a unitary on $p+r$ qubits such that 
  $$
  \MSP(M)\colon \ket{i}^p\ket{0}^r \;\longmapsto\;
  \frac{1}{\|M\|_F} \sum_{k=0}^{R-1} m_{i, k} \ket{0}^p \ket{k}^r +\ket{\bot}.
  $$
  Given $\SP(M)=U(M)V(M)$ the state-preparation operator defined in Definition~\ref{def:sp}, the Modular state-preparation operator can be decomposed as $\MSP(M)=V(M)^\dagger U(M)$. The term $\ket{\bot}$ denotes the superposition of all the other computational basis states, where the $p$-qubit register is not $\ket{0}^p$.
\end{definition}

\Cref{fig:viz} illustrates the internal structure of the $\SP$, $\SP^\dagger$, and $\MSP$ unitaries for a concrete small matrix $M$. By \Cref{def:sp}, the first column of $\SP(M)$ contains the vectorised, Frobenius-normalised entries of $M$; all remaining columns are fixed by unitarity but play no role in the algorithm. Dually, the first row of $\SP(M)^\dagger$ contains the normalised entries of $M$; all remaining rows are irrelevant. Finally, in $\MSP(M)$, the rows of the matrix $M$ are distributed among the columns with index a multiple of $R$. In particular, $\tilde m_{i,j}= \bra{0}^p\bra{j}^r\MSP(M)\ket{i}^p\ket{0}^r,$ with $i=0, \ldots, P-1$ and $j=0, \ldots, R-1.$ All remaining entries of $\MSP(M)$ are irrelevant. While the standard state-preparation operator $\SP(M)$ encodes the row-wise vectorization of the matrix $M$ into its first column, the modular state-preparation operator $\MSP(M)$ preserves the matrix's two-dimensional structure, encoding each row of $M$ in a separate column.

\section{The Two-Tower Algorithm}
\label{sec:algorithm}

This section presents the Two-Tower algorithm in two stages.
\Cref{sec:two_matrix} recalls the two-matrix subroutine
of~\cite{bernasconiMatMul}, which serves as the foundation. 
\Cref{sec:chain_mult} generalizes this construction to an arbitrary
chain of $K$ matrices.

\subsection{Two-matrix multiplication}\label{sec:two_matrix}
In~\cite{bernasconiMatMul}, the authors introduce a quantum
subroutine for the product of two matrices
$A \in \mathbb{R}^{M \times N}$ and $B \in \mathbb{R}^{N \times S}$.
Their original circuit, shown in \Cref{fig:circuit-original}, uses
operators $V(A)$, $U(A)^\dagger$, $V(B^T)$, $U(B^T)$ together with
two register swaps on quantum registers $\ket{0}^n$, $\ket{0}^m$, and $\ket{0}^s$.

We observe that this circuit can be rearranged
into the equivalent form shown in \Cref{fig:circuit-2}, which serves
as the base case ($K = 2$) of the Two-Tower construction.
The key insight is that the composition
$U(A)^\dagger \, \,\mathrm{SWAP} \, U(B^T)  V(B^T) \, \, 
\mathrm{SWAP}\,  V(A)$ can be reorganised as the composition of 
$\MSP(B)$ and $\SP(A)$, eliminating the register swaps and
grouping the operators into two sequential layers acting on disjoint
register pairs. This reorganisation, while functionally identical for $K = 2$, reveals a two-layer structure --- denoted by $\OP(L)$ and $\OP(R)$ --- that generalises to arbitrary $K$. In particular, the rearranged circuit acts on $m + n + s$ qubits and produces:
\begin{multline*}
(\mathbb{I}_M \otimes\MSP(B))\,(\SP(A)\otimes \mathbb{I}_S)\,\ket{0}^{m+n+s} = \\
=\frac{1}{\norm{A}_F\norm{B}_F}
    \sum_{i=0}^{M-1}\sum_{k=0}^{S-1}
    \underbrace{\sum_{j=0}^{N-1} a_{i,j}\,b_{j,k}}_{(AB)_{i,k}}
    \ket{i}^m\ket{0}^n\ket{k}^s + \ket{\bot},
\end{multline*}
encoding the product $AB$ in the outer registers with the internal
index contracted to $\ket{0}^n$ (see \Cref{subsec:proof-even} for details).

\subsection{Chain multiplication for arbitrary \texorpdfstring{$K$}{K}}\label{sec:chain_mult}
The two-matrix construction of \Cref{sec:two_matrix} generalises to an arbitrary chain of $K$ matrices
$\mathcal{W} = M^{(0)} M^{(1)} \cdots M^{(K-1)}$.
Before describing the full algorithm, we isolate the mechanism that
makes the construction work.

\subsubsection{The contraction mechanism}
Recall from \Cref{def:sp} that $\SP(M)$ maps $$\ket{0,0} \mapsto \frac{1}{\norm{M}_F}\sum_{i, j} m_{i, j}\ket{i, j}.$$ The adjoint $\SP(M)^\dagger$ reverses this map. Moreover, when $\SP(M)^\dagger$ acts on a basis state $\ket{i,j}$, it projects onto $\ket{0,0}$ with amplitude proportional to the corresponding matrix entry $m_{i,j}$. Formally,
\begin{equation}\label{eq:sp-collapse}
    \SP(M)^{\dagger}\ket{i,j}
    \;=\;
    \frac{m_{i,j}}{\norm{M}_F}\,\ket{0,0}
    \;+\; \ket{\bot}, \quad   0 \leq i < P, \text{ and } 0 \leq j < R,
\end{equation}
where $\ket{\bot}$ is orthogonal to $\ket{0,0}$. We observe that if two consecutive matrices $M^{(k)}$ and $M^{(k+1)}$ share a summation index encoded in the same quantum register, applying $\SP(M^{(k)})$ populates that register with amplitudes proportional to the entries of $M^{(k)}$, and $\SP(M^{(k+1)})^\dagger$ contracts it back to $\ket{0}$, multiplying each amplitude by the corresponding entry of $M^{(k+1)}$.
By linearity, all such combinations occur simultaneously, and the resulting amplitudes reproduce exactly the classical matrix product. The Two-Tower circuit arranges all matrices so that every shared index is contracted through this mechanism.
\begin{figure*}[t]
  \centering

  \begin{minipage}{0.44\linewidth}
    \centering
    \newsavebox{\circOdd}%
    \sbox{\circOdd}{%
      \resizebox{\linewidth}{!}{%
        \begin{quantikz}[wire types={b,b,b,b,b,b,b,b}, classical gap=0.07cm]
                \lstick{$\ket{0}^{r_{\frac{K-1}{2}}}$} & \gate[wires=2]{\mathrm{SP}(M^{(K-1)})}\gategroup[8,steps=1,style={dotted,rounded corners,inner sep=1pt}, label style={yshift=-9.5cm}]{$\OP(L)$} & \gate[style={inner sep=2pt}]{\mathbb{I}_{R_{\frac{K-1}{2}}}}\gategroup[8,steps=1,style={dotted,rounded corners,inner sep=1pt}, label style={yshift=-9.5cm}]{$\OP(R)$} & \qw \\
                \lstick{$\ket{0}^{l_{\frac{K-1}{2}}}$} & \qw & \gate[wires=2]{\mathrm{SP}(M^{(K-2)})^{\dagger}} & \qw \\
                \lstick{$\ket{0}^{r_{\frac{K-1}{2}-1}}$} & \qw & \qw & \qw \\
                & \wave & \wave & \\
                \lstick{$\ket{0}^{r_1}$}
                & \gate[wires=2]{\mathrm{SP}(M^{(2)})} & \qw & \qw \\
                \lstick{$\ket{0}^{l_1}$} & \qw & \gate[wires=2]{\mathrm{SP}(M^{(1)})^{\dagger}} & \qw \\
                \lstick{$\ket{0}^{r_0}$} & \gate[wires=2]{\mathrm{SP}(M^{(0)})} & \qw & \qw \\
                \lstick{$\ket{0}^{l_0}$} & \qw & \gate[style={inner sep=2pt}]{\mathbb{I}_{L_0}} & \qw
        \end{quantikz}}%
    }%
    \begin{tikzpicture}
      \node[inner sep=0pt, outer sep=0pt] (circ) {\usebox{\circOdd}};
      \begin{scope}[
          shift={(circ.south west)},
          x    ={($(circ.south east)-(circ.south west)$)},
          y    ={($(circ.north west)-(circ.south west)$)}]

        \fill[white, opacity=1 ] (0.28, 0.75) rectangle (0.55, 0.65);
        \node at (0.425, 0.68){\scalebox{0.7}{$\mathrm{SP}(M^{(K-3)})$}};
        \draw[black, thick] (0.28, 0.615) -- (0.28, 0.75) -- (0.55, 0.75) -- (0.55, 0.62);

        \fill[white, opacity=1] (0.64, 0.45) rectangle (0.887, 0.55);
        \node at (0.78, 0.52){\scalebox{0.9}{$\mathrm{SP}(M^{(3)})^{\dagger}$}};
        \draw[black, thick] (0.64, 0.605) -- (0.64, 0.45) -- (0.885, 0.45) -- (0.885, 0.608);
      \end{scope}
    \end{tikzpicture}
    \subcaption{$K$ odd.}
    \label{fig:circuit-K-odd}
  \end{minipage}
  \hfill
  \begin{minipage}{0.44\linewidth}
    \centering
    \newsavebox{\circEven}%
    \sbox{\circEven}{%
      \resizebox{\linewidth}{!}{%
        \begin{quantikz}[wire types={b,b,b,b,b,b,b,b}, classical gap=0.07cm]
                \lstick{$\ket{0}^{l_{\frac{K}{2}}}$} & \gate[style={inner sep=2pt}]{\mathbb{I}_{L_{\frac{K}{2}}}}\gategroup[8,steps=1,style={dotted,rounded corners,inner sep=1pt}, label style={yshift=-9.5cm}]{$\OP(L)$} & \gate[wires=2]{\MSP(M^{(K-1)})}\gategroup[8,steps=1,style={dotted,rounded corners,inner sep=1pt}, label style={yshift=-9.5cm}]{$\OP(R)$} & \qw \\
                \lstick{$\ket{0}^{r_{\frac{K}{2}-1}}$} & \gate[wires=2]{\mathrm{SP}(M^{(K-2)})} & \qw & \qw \\
                \lstick{$\ket{0}^{l_{\frac{K}{2}-1}}$} & \qw & \qw & \qw \\
                & \wave & \wave & \\
                \lstick{$\ket{0}^{r_1}$}
                & \gate[wires=2]{\mathrm{SP}(M^{(2)})} & \qw & \qw \\
                \lstick{$\ket{0}^{l_1}$} & \qw & \gate[wires=2]{\mathrm{SP}(M^{(1)})^{\dagger}} & \qw \\
                \lstick{$\ket{0}^{r_0}$} & \gate[wires=2]{\mathrm{SP}(M^{(0)})} & \qw & \qw \\
                \lstick{$\ket{0}^{l_0}$} & \qw & \gate[style={inner sep=2pt}]{\mathbb{I}_{L_0}} & \qw
        \end{quantikz}}%
    }%
    \begin{tikzpicture}
      \node[inner sep=0pt, outer sep=0pt] (circ) {\usebox{\circEven}};
      \begin{scope}[
          shift={(circ.south west)},
          x    ={($(circ.south east)-(circ.south west)$)},
          y    ={($(circ.north west)-(circ.south west)$)}]
         
         \fill[white, opacity=1] (0.63, 0.75) rectangle (0.88, 0.66);
        \node at (0.755, 0.675){\scalebox{0.7}{$\mathrm{SP}(M^{(K-3)})^{\dagger}$}};
        \draw[black, thick] (0.63, 0.629) -- (0.63, 0.75) -- (0.88, 0.75) -- (0.88, 0.63);

        \fill[white, opacity=1] (0.627, 0.45) rectangle (0.864, 0.55);
        \node at (0.76, 0.52){\scalebox{0.9}{$\mathrm{SP}(M^{(3)})^{\dagger}$}};
        \draw[black, thick] (0.627, 0.606) -- (0.627, 0.45) -- (0.864, 0.45) -- (0.864, 0.608);
\draw[red, thick, dashed] 
    (0.2, 0.08) -- 
    (0.2, 0.875) -- 
    (0.55, 0.875) -- 
    (0.55, 0.765) -- 
    (0.96, 0.765) -- 
    (0.96, 0.08) -- 
    cycle;          
      \end{scope}
    \end{tikzpicture}
    \subcaption{$K$ even.}
    \label{fig:circuit-K-even}
  \end{minipage}
  \caption{Two-Tower circuit for a chain of $K$ matrices.
    \textbf{(a)}~$K$ odd: registers are ordered top-to-bottom as
    $(r_{(K-1)/2},\, l_{(K-1)/2},\, \ldots,\, r_0,\, l_0)$.
    $\OP(L)$: all $\SP(M^{(2i)})$ act simultaneously on
    disjoint register pairs $(r_i, l_i)$.
    $\OP(R)$: all $\SP(M^{(2j+1)})^\dagger$ act simultaneously
    on the interleaved pairs $(l_{j+1}, r_j)$.
    \textbf{(b)}~$K$ even: an additional output register $l_{\frac{K}{2}}$
    appears at the top.
    The last matrix $M^{(K-1)}$ uses the $\MSP$ state preparation
    (denoted $\MSP(M^{(K-1)})$), which acts on
   $(l_{\frac{K}{2}},\, r_{K/2-1})$ within $\OP(R)$. This case can also be expressed as the product of two matrices: the result of applying the odd-case logic to the chain of the first $K-1$ matrices (dashed box in red), multiplied by the final matrix $M^{(K-1)}$ throught the $\MSP$ operator.}
  \label{fig:circuit-K}
\end{figure*}

\subsubsection{Circuit structure}
The circuit consists of two parallel layers (see \Cref{fig:circuit-K}):
$\OP(L)$ applies $\SP$ to every even-indexed matrix, and
$\OP(R)$ applies $\SP^\dagger$ to every odd-indexed matrix, with one exception: when $K$ is even, the last matrix $M^{(K-1)}$ has no paired right neighbour, and $\MSP(M^{(K-1)})$ is applied in place of $\SP(M^{(K-1)})^\dagger$ so that the output column register is preserved (see~\Cref{def:MSP} and~\Cref{fig:circuit-K-even}). Since the operators within each layer act on non-overlapping registers,
they execute in parallel, and the total circuit depth is independent
of $K$. Following the intuition from the matrix-product vectorization discussed in the introduction, these two layers realise the Kronecker products of the first and last matrices of each internal subchain; \Cref{sec:algorithm:walkthrough} illustrates this in detail.
\begin{algorithm*}[t]
\caption{Two-Tower Quantum Matrix Chain Multiplication}\label{alg:skyscraper}
\begin{algorithmic}[1]
  \Require Matrix chain $\mathcal{W} = (M^{(0)},\ldots,M^{(K-1)})$
           with $M^{(k)} \in \mathbb{R}^{P_k \times P_{k+1}}$, all dimensions powers of two.
  \Ensure Quantum state encoding $\mathcal{W} = M^{(0)}\cdots M^{(K-1)}$
          up to Frobenius-norm scaling.
  \State Allocate register $q_k$ of $\log P_k$ qubits for each
         $k = 0,\ldots,K-1$, and a final register $q_K$ of
         $\log P_K$ qubits; initialise all to $\ket{0}$.
    \State \textbf{Layer 1 (parallel).} For each even $k=2i,\;  0 \leq i \leq \lfloor(K-1)/2\rfloor$:
         apply $\SP(M^{(k)})$ on registers $(q_{k+1}, q_k)$.
    \State \textbf{Layer 2 (parallel).} For each odd $k=2j+1,\;  0 \leq j \leq\lceil K/2 -2\rceil$:
         apply $\SP(M^{(k)})^\dagger$ on registers $(q_{k+1}, q_k)$.
  \If{$K$ is even}
    \State Apply $\MSP(M^{(K-1)})$ on $(q_K, q_{K-1})$.
  \EndIf
  \State \Return The state $U_\mathcal{W}\ket{0} 
        $
        $q_0$ holds the row index of $\mathcal{W}$, $q_K$ holds the column index,
         and registers $q_1,\ldots,q_{K-1}$ are in state $\ket{0}$.
\end{algorithmic}
\end{algorithm*}

We observe that each matrix requires two registers to index rows and columns. However, registers are shared between adjacent matrices, so the overall circuit requires $K+1$  registers: two for the first matrix and one additional register for each remaining matrix.
Thus, the circuit grows in qubit count while its depth remains constant --- much like two towers rising floor by floor, which gives the algorithm its name. In~\Cref{def:skyscraper-ops} we formalise the construction.

\begin{definition}[Two-Tower operators]\label{def:skyscraper-ops}
  Let $\mathcal{W} = M^{(0)}, \ldots, M^{(K-1)}$ be a matrix chain with $K \ge 2$.
Partition the chain into:
  \begin{itemize}
    \item \emph{Left matrices} (even index):
          $M^{(2i)} \in \mathbb{R}^{L_i \times R_i}$, for
          $0 \leq i \leq \lfloor(K-1)/2\rfloor$;
    \item \emph{Right matrices} (odd index):
          $M^{(2j+1)} \in \mathbb{R}^{R_j \times L_{j+1}}$, for
          $0 \leq j \leq \lfloor(K/2 - 1)\rfloor$.
  \end{itemize}
  Let us define $l_i = \log L_i$ and $r_i = \log R_i$ for the qubit counts.
  For each left matrix allocate a register pair
  $\ket{0}^{l_i}\ket{0}^{r_i}$; the initial state is
\begin{equation*}\label{eq:psi0}
    \ket{\psi_0}
    = \left\{\begin{array}{ll}
    \bigotimes_{i=0}^{(K-1)/2} \ket{0}^{l_i}\ket{0}^{r_i}, & \mbox{if $K$ is odd}\\[9pt]
    \left(\bigotimes_{i=0}^{K/2-1} \ket{0}^{l_i}\ket{0}^{r_i}\right) \ket{0}^{l_{K/2}}, & \mbox{if $K$ is even.}
    \end{array}
    \right.
\end{equation*}
The circuit uses $\sum_{i=0}^{(K-1)/2}(l_i + r_i)$ qubits if $K$ is odd and $\sum_{i=0}^{K/2-1}(l_i + r_i) + l_{K/2}$ if $K$ is even. 

Define:
  \begin{align}
    \OP(L) &= \left\{\begin{array}{ll}
    \bigotimes_{i=0}^{(K-1)/2} \SP\!\left(M^{(2i)}\right),& \mbox{if $K$ is odd}\\[9pt]
    \bigotimes_{i=0}^{K/2-1} \SP\!\left(M^{(2i)}\right) \otimes \Id_{l_\frac{K}{2}}, & \mbox{if $K$ is even;}\\[9pt]
    \end{array}
    \right.
    \label{eq:OPL}\\[4pt]
    \OP(R) &= \adjustbox{max width=0.82\linewidth}{$\displaystyle\left\{\begin{array}{ll} \underbrace{\Id_{L_0}}_{\text{left boundary}}
              \otimes
              \underbrace{\bigotimes_{j=0}^{(K-3)/2} \SP\!\left(M^{(2j+1)}\right)^\dagger}_{\text{contracts shared indexes}}
              \otimes
              \underbrace{\Id_{R_{(K-1)/2}}}_{\text{right boundary}} & \mbox{if $K$ is odd}\\
              \underbrace{\Id_{L_0}}_{\text{left boundary}}
              \otimes
              \underbrace{\bigotimes_{j=0}^{K/2-2}\SP\!\left(M^{(2j+1)}\right)^\dagger}_{\text{contracts shared indexes}}
              \otimes \
              \underbrace{\MSP(M^{(K-1)})}_{\text{right boundary}} & \mbox{if $K$ is even}.
              \end{array}
    \right.$}
    \label{eq:OPR}
  \end{align}
  The \emph{Two-Tower unitary} is $U_\mathcal{W} = \OP(R)\OP(L)$.
\end{definition}
In $\OP(L)$, each $\SP(M^{(2i)})$ acts on the register pair $(\ket{0}^{l_i},\ket{0}^{r_i})$; all factors target disjoint registers and execute in parallel. In $\OP(R)$, each $\SP(M^{(2j+1)})^\dagger$ acts on the pair $(\ket{\,\cdot\,}^{r_j},\ket{\,\cdot\,}^{l_{j+1}})$ --- the boundary shared between adjacent even-matrix registers --- and, by \Cref{eq:sp-collapse}, contracts it to $\ket{0}$, obtaining the corresponding entry of $M^{(2j+1)}$ as an amplitude up to a normalization factor. The identity operators $\Id_{L_0}$ and $\Id_{R_{(K-1)/2}}$ leave the outer indices $l_0$ and $r_{(K-1)/2}$ untouched; these encode the row and column indices of the product $\mathcal{W}$. The even-$K$ case (see \Cref{fig:circuit-K-even}) requires $\MSP$ rather than $\SP^\dagger$ because $\SP^\dagger$ would contract both registers to $\ket{0}$, leaving no register to carry the output column index of $\mathcal{W}$; $\MSP$ contracts only the shared boundary register while keeping the output register free, yielding the same Frobenius-norm-scaled product. \Cref{alg:skyscraper} summarises the complete procedure; open-source Qiskit and QCLAB implementations are publicly available$^1$. 

\subsubsection{Two-Tower Algorithm Intuition}
\label{sec:algorithm:walkthrough}
The construction draws from a classical result in matrix theory. Lemma~4.3.1 of~\cite{horn2012matrix} expresses the product of three matrices via the Kronecker product and vectorization:
\begin{equation}\label{eq:horn}
AXB = C \quad\Longleftrightarrow\quad (B^T\otimes A)\,\text{vec}(X) = \text{vec}(C).
\end{equation}
In the quantum setting, preparing $A$ and $B$ on separate registers realises their Kronecker product naturally through the tensor-product structure of the Hilbert space. However, \Cref{eq:horn} does not translate directly: $\SP(M)$ encodes any matrix in the statevector as a row-wise vectorization (\Cref{eq:sp}), rather than as a matrix operator as~\Cref{eq:horn} requires for $A$ and $B$. Consequently, whereas the classical formula $(B^T\otimes A)\,\text{vec}(X)$ yields exactly $\text{vec}(C)$, the Two-Tower circuit produces a higher-dimensional statevector containing $\text{vec}(C)$ alongside spurious contributions from index-mismatched combinations of entries of $A$, $X$, and $B$. A further difference is that the Two-Tower does not require any matrix transposition: each matrix is loaded directly through its own state-preparation operator, without reshaping $B$ into $B^T$. The contraction mechanism (\Cref{eq:sp-collapse}) is what separates the two: $\SP(X)^\dagger$ projects the valid, index-matched contributions onto the $\ket{0}$ subspace of the shared registers, while all spurious terms fall into the orthogonal complement $\ket{\bot}$. To illustrate this filtering and observe how the correct sums accumulate, we trace the circuit of \Cref{fig:circuit-3} for $D = ABC$ with $A \in \mathbb{R}^{M \times N}$, $B \in \mathbb{R}^{N \times S}$, $C \in \mathbb{R}^{S \times T}$. Let $m = \log M$, $n = \log N$, $s = \log S$, $t = \log T$, then we define the following quantum registers:
\begin{equation*}\label{eq:psi0-3}
  \ket{\psi_0} = \ket{0}^{m}\ket{0}^{n}\ket{0}^{s}\ket{0}^{t}.
\end{equation*}
\begin{figure}[t]
  \centering
  \begin{quantikz}[wire types={b,b,b,b}, classical gap=0.07cm]
    \lstick{$\ket{0}^{t}$} \slice{$\ket{\psi_0}$}
      & \gate[wires=2]{\SP(C)}\gategroup[4,steps=1,style={dotted, rounded corners, inner sep=1pt},label style={yshift=-5cm}]{$\OP(L)$} \slice{$\ket{\psi_1}$}
      & \gate[style={inner sep=2pt}]{\mathbb{I}_{T}}\gategroup[4,steps=1,style={dotted, rounded corners,inner sep=1pt},label style={yshift=-5cm}]{$\OP(R)$} \slice{$\ket{\psi_2}$}
      & \qw \\
    \lstick{$\ket{0}^{s}$}
      & \qw
      & \gate[wires=2]{\SP(B)^{\dagger}}
      & \qw \\
    \lstick{$\ket{0}^{n}$}
      & \gate[wires=2]{\SP(A)}
      & \qw
      & \qw \\
    \lstick{$\ket{0}^{m}$}
      & \qw
      & \gate[style={inner sep=2pt}]{\mathbb{I}_{M}}
      & \qw
  \end{quantikz}
  \caption{Two-Tower circuit for $D = ABC$ with
    $A \in \mathbb{R}^{M \times N}$, $B \in \mathbb{R}^{N \times S}$,
    $C \in \mathbb{R}^{S \times T}$,
    and $m = \log M$, $n = \log N$, $s = \log S$, $t = \log T$.
    \textbf{$\OP(L)$:} $\SP(C)$ on registers $(t,s)$ and $\SP(A)$
    on registers $(n,m)$ execute in {parallel}.
    \textbf{$\OP(R)$:} $\SP(B)^\dagger$ on registers $(s,n)$ contracts the shared summation index.
    The output state $\ket{\psi_2}$ encodes $D$ in registers
    $(t,m)$ (outer boundaries) with the internal registers $(s,n)$ contracted to $\ket{0}$.}
  \label{fig:circuit-3}
\end{figure}
\smallskip\noindent\textbf{Layer 1: apply $\OP(L) = \SP(A) \otimes \SP(C)$.} Operator $\SP(A)$ acts on $(\ket{0}^{m}, \ket{0}^{n})$ while $\SP(C)$ acts on $(\ket{0}^{s}, \ket{0}^{t})$ simultaneously:
\begin{equation*}\label{eq:psi1-3}
\resizebox{\linewidth}{!}{$\displaystyle
  \ket{\psi_1}= \OP(L)\ket{\psi_0}
  = \frac{1}{\norm{A}_F\norm{C}_F}
    \sum_{i=0}^{M-1}\sum_{j=0}^{N-1}\sum_{u=0}^{S-1}\sum_{v=0}^{T-1}
    a_{i,j}\, c_{u,v}\;\ket{i}^{m}\ket{j}^{n}\ket{u}^{s}\ket{v}^{t}.
$}
\end{equation*}
\smallskip\noindent\textbf{Layer 2: apply $\OP(R) = \Id_M \otimes \SP(B)^\dagger \otimes \Id_T$.}
By \Cref{eq:sp-collapse}, $\SP(B)^\dagger$ acts on registers $(n,s)$
picking up $\frac{b_{j, u}}{\norm{B}_F}$ for each basis state $\ket{j,u}$.
Applying to $\ket{\psi_1}$:
\begin{equation*}
\resizebox{\linewidth}{!}{$\displaystyle
\begin{aligned}
  \ket{\psi_2}=\OP(R)\ket{\psi_1}
  &= \frac{1}{\norm{A}_F\norm{B}_F\norm{C}_F}
     \sum_{i=0}^{M-1}\sum_{v=0}^{T-1}
     \Biggl(\sum_{j=0}^{N-1}\sum_{u=0}^{S-1}
       a_{i,j}\, b_{j,u}\, c_{u,v}
     \Biggr)\ket{i,0,0,v}
     + \ket{\bot}.
\end{aligned}
$}
\end{equation*}
The contraction of shared registers via \Cref{eq:sp-collapse} is visible moving from
$\ket{\psi_1}$ to $\ket{\psi_2}$: $\SP(B)^\dagger$ projects each
basis state $\ket{j}^n\ket{u}^s$ onto $\ket{0}^n\ket{0}^s$ with
amplitude $\frac{b_{j,u}}{\norm{B}_F}$.
The indices $j$ and $u$ are summed over in the classical product
$(ABC)_{i, v} = \sum_{j,u} a_{i,j}\,b_{j,u}\,c_{u,v}$.
In the Two-Tower circuit, these sums arise from the superposition
over the shared registers $(n,s)$: the adjoint state preparation
$\SP(B)^\dagger$ converts each basis state $\ket{j,u}$ into the
corresponding matrix entry, and the superposition accumulates all
inner-product terms simultaneously.
As a result, the double sum over $(j,u)$ yields
$d_{i,v} = (ABC)_{i,v}$, confirming that $\ket{\psi_2}$ encodes $D$
in the outer registers $(m,t)$, with the shared registers $(n,s)$
contracted to $\ket{0,0}$.

\section{Analysis}
\label{sec:analysis}
We organize this section as follows: 
\Cref{subsec:correctness} proves that the Two-Tower circuit correctly
encodes the chain product for any $K$.
\Cref{subsec:complexity-analysis} derives the depth and qubit bounds.
\Cref{subsec:normalisation-signal} discusses the normalisation factor and 
the signal weight.

\subsection{Correctness}\label{subsec:correctness}
The correctness proof relies on the contraction mechanism of
\Cref{eq:sp-collapse}, established in \Cref{sec:chain_mult}.
The main result follows.

\begin{theorem}[Two-Tower Matrix Multiplication]\label{thm:main}
  Let $\mathcal{W} = M^{(0)}, \ldots, M^{(K-1)}$ be a matrix chain with $M^{(k)}\in \mathbb{R}^{P_k \times P_{k+1}}$ for $k = 0, \ldots, K-1$,
  and with all $p_k=\log_2P_k$.
  There exists a unitary $U_\mathcal{W} = \OP(R)\,\OP(L)$ such that
  \begin{equation}\label{eq:main-mapping}
  \begin{aligned}
    &U_\mathcal{W} \;\colon\;
    \ket{0}^{p_0}\ket{0}^{p_1+\cdots+p_{K-1}}\ket{0}^{p_K}
    \\
    &\;\longmapsto\;
    \frac{1}{\prod_{k=0}^{K-1}\norm{M^{(k)}}_F}
    \sum_{p=0}^{P_0-1}\sum_{q=0}^{P_K-1}
      w_{p,q}\,
      \ket{p}^{p_0}\ket{0}^{p_1+\cdots+p_{K-1}}\ket{q}^{p_{K}}
    + \ket{\bot},
  \end{aligned}
  \end{equation}
  where $\mathcal{W} = M^{(0)} \cdots M^{(K-1)}$ and $w_{p,q} = (\mathcal{W})_{p,q}$.
  The term $\ket{\bot}$ is a superposition
  of basis states in which at least one intermediate (shared) register
  is not $\ket{0}$; it is orthogonal to the subspace
  $\mathcal{H}_{\mathcal{W}} = \mathrm{span}\{\lvert p,0,\ldots,0,q \rangle \}$ for $0\le p < P_0,\; 0\le q < P_K$
  The circuit $U_\mathcal{W}$ has:
  \begin{itemize}
    \item \textbf{Depth}
          $\mathcal{O}\!\bigl(\max_{k=0}^{K-1} {\,\mathrm{polylog}}(P_k P_{k+1})\bigr)$, independent of $K$;
    \item \textbf{Qubits}
          $O\!\bigl(\sum_{k=0}^{K} \log P_k\bigr)$.
  \end{itemize}
\end{theorem}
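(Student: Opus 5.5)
I will prove the mapping by direct computation of $U_\mathcal{W}\ket{0} = \OP(R)\,\OP(L)\ket{\psi_0}$, treating odd and even $K$ separately, and afterwards read off the depth and qubit counts from \Cref{def:skyscraper-ops}.

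\emph{Odd $K$.} Since $\OP(L)$ is a tensor product of $\SP(M^{(2i)})$ on the disjoint pairs $(l_i,r_i)$, \Cref{eq:sp} gives
\[
\OP(L)\ket{\psi_0}=\prod_{i}\frac{1}{\norm{M^{(2i)}}_F}\sum_{a_0,b_0,\ldots}\prod_{i} m^{(2i)}_{a_i,b_i}\bigotimes_i\ket{a_i}^{l_i}\ket{b_i}^{r_i}.
\]
Next I apply $\OP(R)$ term by term by linearity. On each basis state, the factor $\SP(M^{(2j+1)})^\dagger$ acts on the interleaved pair $(r_j,l_{j+1})$ holding $\ket{b_j,a_{j+1}}$. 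By \Cref{eq:sp-collapse} this yields $\frac{m^{(2j+1)}_{b_j,a_{j+1}}}{\norm{M^{(2j+1)}}_F}\ket{0,0}$ plus a component orthogonal to $\ket{0,0}$ on that pair.

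Expanding the tensor product of these factors, the only product term lying in $\mathcal{H}_\mathcal{W}$ is the one where every pair takes its $\ket{0,0}$ component. Every other term has at least one shared pair outside $\ket{0,0}$, so it lies in $\mathcal{H}_\mathcal{W}^\perp$. This orthogonality holds for each basis input, and $\mathcal{H}_\mathcal{W}^\perp$ is a subspace, so the sum of all such terms stays in it and can be collected into $\ket{\bot}$.

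The $\mathcal{H}_\mathcal{W}$ component is then
\[
\frac{1}{\prod_k\norm{M^{(k)}}_F}\sum_{a_0,b_{(K-1)/2}}\Bigl(\sum_{b_0,a_1,\ldots}\prod_{k} m^{(k)}_{\cdot,\cdot}\Bigr)\ket{a_0}\ket{0}\cdots\ket{0}\ket{b_{(K-1)/2}}.
\]
Relabelling the summed indices in chain order as $(b_0,a_1,b_1,\ldots)=(r_0,r_1,\ldots,r_{K-2})$ reproduces exactly the expression for $w_{p,q}$ in \Cref{def:chain}, with $p=a_0$ and $q=b_{(K-1)/2}$. This is the bookkeeping step I expect to be the main obstacle. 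It requires checking that the register layout of \Cref{def:skyscraper-ops} has the dimensions matching: $R_j=P_{2j+1}$ and $L_{j+1}=P_{2j+2}$. It also requires checking that the row/column ordering of the pair fed to $\SP^\dagger$ matches the index order $(b_j,a_{j+1})$ of $M^{(2j+1)}$. I would verify this first on the $K=3$ walkthrough of \Cref{sec:algorithm:walkthrough}, then write the general index map explicitly.

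\emph{Even $K$.} I reduce to the odd case plus one $\MSP$ step, as in the dashed box of \Cref{fig:circuit-K-even}. All operators except $\MSP(M^{(K-1)})$ form the odd-$K$ circuit for the subchain $M^{(0)}\cdots M^{(K-2)}$. The only exception is its right boundary register $r_{K/2-1}$, which $\MSP$ now also touches. So before $\MSP$ acts, the odd-case computation gives the same expansion, with the register $r_{K/2-1}$ holding a free index $b$, tensored with $\ket{0}^{l_{K/2}}$.

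By \Cref{def:MSP}, $\ket{b}\ket{0}\mapsto\frac{1}{\norm{M^{(K-1)}}_F}\sum_q m^{(K-1)}_{b,q}\ket{0}\ket{q}+\ket{\bot}$. Summing over $b$ completes the last contraction and leaves $q$ in register $l_{K/2}$. There is one point I must argue carefully. The spurious part of $\MSP$ has register $r_{K/2-1}\neq 0$, and $\MSP$ acts only on $(r_{K/2-1},l_{K/2})$, so it does not touch the other shared registers. Hence states that were already in $\mathcal{H}_\mathcal{W}^\perp$ because some other shared register is nonzero remain in $\mathcal{H}_\mathcal{W}^\perp$. Combining the two cases gives \Cref{eq:main-mapping}.

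\emph{Resources.} $\OP(L)$ and $\OP(R)$ are each tensor products of operators on pairwise disjoint registers, so each layer has depth equal to the maximum depth of its factors. Under the QRAM model, \Cref{def:sp} gives $\mathcal{O}(\mathrm{polylog}(P_kP_{k+1}))$ for $\SP$ and $\SP^\dagger$. The same bound holds for $\MSP=V^\dagger U$, which is a composition of two such pieces. Two layers therefore give depth $\mathcal{O}(\max_k\mathrm{polylog}(P_kP_{k+1}))$, independent of $K$.

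The qubit count is the sum of the register sizes in \Cref{def:skyscraper-ops}. There are $K+1$ registers of sizes $p_0,\ldots,p_K$, giving $\sum_{k=0}^K p_k$ qubits, plus whatever ancillas the QRAM state preparation needs. Each layer's factors can reuse those ancillas, and I would assume they are $\mathcal{O}(\log(P_kP_{k+1}))$ per factor, which keeps the total at $\mathcal{O}(\sum_k\log P_k)$.
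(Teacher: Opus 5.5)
Your proposal is correct and follows essentially the same route as the paper. For odd $K$ you expand $\OP(L)\ket{\psi_0}$ as a product of state preparations, contract the shared pairs $(r_j,l_{j+1})$ via \Cref{eq:sp-collapse}, and reindex to obtain $w_{p,q}$; for even $K$ you treat the circuit as the odd circuit on $M^{(0)}\cdots M^{(K-2)}$ followed by $\MSP(M^{(K-1)})$, and you read depth and qubit counts off the two parallel layers. Your added check that $\MSP(M^{(K-1)})$, acting only on $(r_{K/2-1},l_{K/2})$, cannot map the existing spurious component back into $\mathcal{H}_\mathcal{W}$ makes explicit a step the paper's even-case proof leaves implicit, but it does not change the approach.
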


\begin{proof}[Proof of \Cref{thm:main} for odd $K$]\label{subsec:proof-odd}
Using the notation of \Cref{sec:chain_mult}: left matrices
$M^{(2i)} \in \mathbb{R}^{P_k \times P_{k+1}}$ (where $k = 2i$) and right matrices
$M^{(2j+1)} \in \mathbb{R}^{ P_{k+1} \times  P_{k+2}}$  (where $k = 2j$), with shared registers shared between adjacent pairs. 

\smallskip\noindent\textbf{Step~1.}
Since all factors of $\OP(L) = \bigotimes_i \SP(M^{(2i)})$ act on
disjoint registers:
\begin{equation*}\label{eq:psi1-gen}
\resizebox{\linewidth}{!}{$\displaystyle
  \ket{\psi_1} = \frac{1}{\prod_i \norm{M^{(2i)}}_F}
    \!\!\sum_{\vec{l},\vec{r}}
    \!\!\left(\prod_i m^{(2i)}_{l_i,r_i}\right)
    \ket{l_0}\ket{r_0,\,l_1,\, r_1,\ldots,r_{(K-1)/2-1}\,l_{(K-1)/2}}\ket{r_{(K-1)/2}}.
$}
\end{equation*}

Here $\vec{l} = (l_0, \ldots, l_{(K-1)/2})$ and
$\vec{r} = (r_0, \ldots, r_{(K-1)/2})$ collect all row and column indices of
the left matrices $M^{(2i)}$ (so $L_i = P_{2i}$ and $R_i = P_{2i+1}$), with
$0 \leq l_i < L_i$ and $0 \leq r_i < R_i$. Of these $K+1$ indices, the two
outmost ones, $l_0$ and $r_{(K-1)/2}$, carry the output row and column of
$\mathcal{W}$; the remaining shared indices
$r_0, l_1, r_1, \ldots, r_{(K-1)/2-1}, l_{(K-1)/2}$ are the ones
contracted by $\OP(R)$.

\smallskip\noindent\textbf{Step~2.}
By \Cref{eq:sp-collapse}, each $\SP(M^{(2j+1)})^\dagger$ on shared pair
$\ket{r_j, l_{j+1}}$ contributes amplitude
$\frac{m^{(2j+1)}_{r_j,l_{j+1}}}{\norm{M^{(2j+1)}}_F}$ on $\ket{0,0}$.
Applying all $(K-1)/2$ such operators in parallel and collecting:
\begin{equation*}\label{eq:OPR-psi1-gen}
\resizebox{\linewidth}{!}{$\displaystyle
  \OP(R)\ket{\psi_1}
  = \frac{1}{\prod_{k} \norm{M^{(k)}}_F}
    \sum_{\vec{l},\vec{r}}
    \!\!\left(\prod_i m^{(2i)}_{l_i,r_i}\right)
    \!\!\left(\prod_j m^{(2j+1)}_{r_j,l_{j+1}}\right)
    \ket{l_0,\underbrace{0,\ldots,0}_{\text{shared}},r_{(K-1)/2}}
    + \ket{\bot}.
$}
\end{equation*}
\noindent Observing that
\[
\resizebox{\linewidth}{!}{$\displaystyle
   \sum_{\vec{l},\vec{r}}
    \!\!\left(\prod_{i} m^{(2i)}_{l_i,r_i}\right)
    \!\!\left(\prod_j m^{(2j+1)}_{r_j,l_{j+1}}\right)=  \!\!\sum_{r_0,l_1,\ldots,l_{(K-1)/2}}
  \!\!m^{(0)}_{l_0,r_0}\cdots m^{(K-1)}_{l_{(K-1)/2},r_{(K-1)/2}}
  = w_{l_0,\,r_{(K-1)/2}}\;,
$}
\]
and setting
$p \leftarrow l_0$, $q \leftarrow r_{(K-1)/2}$
yields~\Cref{eq:main-mapping}.
\end{proof}


When $K$ is even, the chain contains $K/2$ even-indexed matrices ${M^{(0)}, M^{(2)}, \ldots, M^{(K-2)}}$ and $K/2$ odd-indexed matrices ${M^{(1)}, M^{(3)}, \ldots, M^{(K-1)}}$. The last odd-indexed matrix $M^{(K-1)}$ has no paired right neighbour, so a plain $\SP(M^{(K-1)})^{\dagger}$ would contract both registers to $\ket{0}$, leaving no register to carry the output column index. Instead, we encode $M^{(K-1)}$ via the modular state-preparation operator $\MSP(M^{(K-1)})$.

\begin{proof}[Proof of \Cref{thm:main} for even $K$]\label{subsec:proof-even}
Apply the odd-$K$ argument of \Cref{subsec:proof-odd} to the first $K-1$
matrices (an odd-length chain).
Then, we append a fresh ancilla register of $p_K=\log_2 P_K$ qubits for the
final column index.
The intermediate state after processing the first $K-1$ matrices of the chain $Z= M^{(0)}\cdots M^{(K-2)}$ is:
\begin{equation*}\label{eq:psi-half}
   \frac{1}{\prod_{k=0}^{K-2}\norm{M^{(k)}}_F}
    \sum_{l_0=0}^{P_0-1}\sum_{d=0}^{P_{K-1}-1}
    z_{l_0,d}\,
\ket{l_0,\underbrace{0,\ldots,0}_{\text{shared}},d,\underbrace{0}_{\text{ancilla}}}
    \;+\; \ket{\bot},
\end{equation*}

where $z_{l_0,d} = (M^{(0)}\cdots M^{(K-2)})_{l_0,d}$ and the ancilla register remains in state $\ket{0}$. On the outer registers $(p_0,p_{K-1})$ the amplitudes
are exactly the entries of $Z$, so the state coincides with the output of
$\SP(Z)$ {up to} the subnormalisation factor $\lVert Z\rVert_F\big/\prod_{k=0}^{K-2}\lVert M^{(k)}\rVert_F\le 1$ (the tower carries the product normalisation $\prod_{k=0}^{K-2}\lVert M^{(k)}\rVert_F$ rather than $\lVert Z\rVert_F$; see \Cref{subsec:normalisation-signal}). This scalar does not affect the final step: applying $\MSP(M^{(K-1)})$ as in the two-matrix
circuit of \Cref{sec:two_matrix} (since $\mathcal{W}=ZM^{(K-1)}$) contracts the
shared index and multiplies in the entries of $M^{(K-1)}$, and the normalization factors combine to $\prod_{k=0}^{K-1}\lVert M^{(k)}\rVert_F$, yielding \Cref{eq:main-mapping}.
\end{proof}


\subsection{Complexity}\label{subsec:complexity-analysis}
The depth and qubit bounds stated in \Cref{thm:main} derive from the structure of the two-layer circuit and the state-preparation cost. In particular:

\begin{itemize}
  \item \textbf{Depth.} Each $\SP(M^{(k)})$ and $\SP(M^{(k)})^\dagger$ has depth $\mathcal{O}({\mathrm {polylog}}(P_kP_{k+1}))$. Since all even-indexed $\SP$ operators execute in parallel on the first layer, and all odd-indexed $\SP^\dagger$ operators execute in parallel on the second layer, the total depth is $\mathcal{O}(\max_{k} {\mathrm {polylog}}(P_k P_{k+1}))$, independent of $K$.
  \item \textbf{Qubits.} The circuit allocates one register of $p_k=\log_2 P_k$ qubits for each matrix $M^{(k)}$, $k=0,\ldots,K-1$, and a final register  $p_K=\log_2 P_K$ qubits for the output column index. Adjacent matrices share a shared register already counted in the allocation for $M^{(k+1)}$, and no additional ancillae are required. The total is $\sum_{k=0}^{K} \log_2 P_k = \Theta\!\bigl(\sum_{k=0}^{K} \log_2 P_k \bigr)$. 
  \item \textbf{Total gate count.} The two layers apply $K$ state-preparation operators overall, so the total gate count is $\Theta\bigl(\sum_{k=0}^{K-1} \mathrm{polylog}(P_k P_{k+1})\bigr)$ which becomes  $\Theta(K\,\mathrm{polylog}(N))$ for square $N \times N$ matrices.
\end{itemize}

\subsection{Normalisation and signal weight}\label{subsec:normalisation-signal}
The amplitudes in~\Cref{eq:main-mapping} carry the {normalisation factor} $\mathcal{N} = \frac{1}{\prod_{k=0}^{K-1} \norm{M^{(k)}}_F}$, which we compute classically before running the circuit. The residual term $\ket{\bot}$ lies in the orthogonal complement of $\mathcal{H}_{\mathcal{W}} = \mathrm{span}\{ \lvert p,0,\ldots,0,q \rangle \}$ for $0 \leq p < P_0$ and $0 \leq q < P_K$ and has squared norm $1 - \norm{\mathcal{W}}_F^2 \mathcal{N}^2$; this term does not indicate a failure of the subroutine but is the component of the unitary evolution that falls outside the subspace encoding the product. We therefore define the {signal weight} as the total squared amplitude carried by the basis states in $\mathcal{H}_{\mathcal{W}}$ --- those whose amplitudes, up to the normalisation factor $\mathcal{N}$, are exactly the entries of the chain product:
$$
\Omega_{\mathcal{W}}
= \frac{\norm{\mathcal{W}}_F^2}{\prod_{k=0}^{K-1} \norm{M^{(k)}}_F^2}.
$$
In a standalone algorithm, $\Omega_{\mathcal{W}}$ would correspond to the success probability of post-selecting on $\mathcal{H}_\mathcal{W}$. However, the Two-Tower --- much like the QFT --- is designed as a subroutine with a downstream algorithm applied before any measurement; we therefore use the term {signal weight} to emphasise that $\Omega_{\mathcal{W}}$ is a parameter inherited by the subsequent computation, not a measurement outcome of the circuit itself. 

By sub-multiplicativity of the Frobenius norm
($\norm{AB}_F \leq \norm{A}_F \norm{B}_F$),
$\Omega_{\mathcal{W}} \leq 1$, with $\Omega_{\mathcal{W}} = 1$ if and only
if $\|\mathcal{W}\|_F= \prod_{k=0}^{K-1} \|M^{(k)}\|_F$, and this happens in special cases. For example, consider $K$ copies of a normal $N \times N$ matrix $A$, i.e., a matrix such that  $A^\dagger A=AA^\dagger$ and denote its singular values by
$\sigma_1 \geq \cdots \geq \sigma_N \ge 0$.
Since $\norm{A}_F^2 = \sum_i \sigma_i^2$ and singular values of $\mathcal{W} = A^K$ when $A$ is normal are $\sigma_i^K$,  we get  
$\norm{A^K}_F^2 = \sum_i \sigma_i^{2K}$.  The signal weight becomes
$$
  \Omega_{\mathcal{W}}
  = \frac{\norm{A^K}_F^2}{\norm{A}_F^{2K}}
  = \frac{\sum_i \sigma_i^{2K}}{(\sum_i \sigma_i^2)^K},
$$
which for large $K$ is dominated by the largest singular value:
$\Omega_{\mathcal{W}} \approx (\sigma_1^2 / \norm{A}_F^2)^{K-1}$.
In general, $\Omega_{\mathcal{W}}$ can decrease with $K$; the rate of decay depends on the spectral structure of the matrices.
Two edge cases are instructive.
\begin{itemize}
  \item \emph{Well-conditioned matrices}
    ($\sigma_1/\sigma_N \approx 1$): all singular values are
    approximately equal to the same  $\sigma$, so
    $\|A^K\|_F^2 / \norm{A}_F^2 \approx N\,\sigma^{2K} / (N\sigma^2)^K
    $, giving $\Omega_{\mathcal{W}} \approx N^{-(K-1)}$. The signal weight decays exponentially with $K$: the information spreads across many directions and repeated multiplication reduces the overall signal strength. Note that the unitary case ($\sigma = 1$) belongs to this category, since $\Omega_{\mathcal{W}} = N/N^K = 1/N^{K-1}$.
  \item \emph{Near-rank-one matrices}
    ($\sigma_2 \approx 0$): a single singular value dominates the Frobenius norm, so $\Omega_{\mathcal{W}} \approx \frac{\sigma_1^{2K}}{\norm{A}_F^{2K}} \approx 1$. The matrix acts essentially as a scalar along one direction, and repeated multiplications do not weaken the signal.
\end{itemize}
The signal weight is therefore data-dependent: matrices with flat spectra cause rapid decay, whereas matrices with peaked spectra maintain $\Omega_{\mathcal{W}}$ close to unity. This phenomenon is counterintuitive: in most numerical settings, well-conditioned matrices are the favorable case, but here a flat spectrum spreads the signal uniformly across all singular directions, diluting it exponentially with $K$. Nevertheless, we observe that Amplitude Amplification~\cite{brassard2002quantum} can boost $\Omega_{\mathcal{W}}$ to any desired constant, at the cost of $\mathcal{O}(1/\sqrt{\Omega_{\mathcal{W}}})$ calls to the Two-Tower circuit and its inverse, introducing a dependence on $K$. 
\begin{figure}[t]    
  \centering
  \begin{quantikz}[wire types={b,b,b,b}, classical gap=0.07cm]
    \lstick{$\ket{0}^{t}$}
      & \gate[wires=2]{\SP(C)}\gategroup[4,steps=1,style={dotted, rounded corners, inner sep=1pt},label style={yshift=-5cm}]{$\OP(L)$}
      & \gate[style={inner sep=2pt}]{\SP(v)^{\dagger}}\gategroup[4,steps=1,style={dotted, rounded corners, inner sep=1pt},label style={yshift=-5cm}]{$\OP(R)$}
      & \qw \\
    \lstick{$\ket{0}^{s}$}
      & \qw
      & \gate[wires=2]{\SP(B)^{\dagger}}
      & \qw \\
    \lstick{$\ket{0}^{n}$}
      & \gate[wires=2]{\SP(A)}
      & \qw
      & \qw \\
    \lstick{$\ket{0}^{m}$}
      & \qw
      & \gate[style={inner sep=2pt}]{\mathbb{I}_{M}}
      & \qw
  \end{quantikz}    
  \caption{Chain-vector circuit for $w = ABCv$ with
    $A \in \mathbb{R}^{M \times N}$, $B \in \mathbb{R}^{N \times S}$,
    $C \in \mathbb{R}^{S \times T}$, $v \in \mathbb{R}^{T \times 1}$.
    Since $P_K= 1$, the output register vanishes and the
    $\MSP$ decomposition of the last factor reduces to
    $\SP(v)^\dagger$ on the $t$-qubit boundary register alone.
    The product $w = ABCv \in \mathbb{R}^{M}$ is encoded
    in the $m$-qubit register, with all shared registers $(t,s,n)$
    contracted to $\ket{0}$.}
    \label{fig:circuit-vector}
\end{figure}

\section{Matrix-Chain-Vector Multiplication}
\label{sec:vector}
An important special case of \Cref{thm:main} arises when the last matrix in the chain is a column vector. This case finds direct application in iterative solvers, the power method, and Krylov subspace methods~\cite{GolubVanLoan1996}, all of which require repeated matrix-vector products. The following corollary formalises this specialisation.

\begin{corollary}[Matrix-chain-vector multiplication]\label{cor:vector-case}
  Setting $P_K = 1$ in \Cref{thm:main} identifies $M^{(K-1)}$ with
  a column vector $v \in \mathbb{R}^{P_{K-1}}$.
  The right-boundary register vanishes ($\log 1 = 0$ qubits), and
  the Two-Tower encodes the product
  $w = M^{(0)}\cdots M^{(K-2)}\,v \in \mathbb{R}^{P_0}$
  as the amplitudes of the $\log P_0$-qubit output register, in
  depth $\mathcal{O}(\max_k {\mathrm {polylog}}(P_k P_{k+1}))$ using
  $\Theta\!\bigl(\sum_{k=0}^{K-1} \log P_k \bigr)$
  qubits.
\end{corollary}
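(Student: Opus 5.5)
The corollary follows by instantiating \Cref{thm:main} with $P_K=1$, i.e.\ $p_K=0$. So $M^{(K-1)}=v$ is a $P_{K-1}\times 1$ matrix with entries $m^{(K-1)}_{r,0}=v_r$, and $\norm{M^{(K-1)}}_F=\norm{v}$. The output register $\ket{q}^{p_K}$ has zero qubits, and the sum over $q$ has the single term $q=0$. The mapping \eqref{eq:main-mapping} then reads
\[
U_\mathcal{W}\ket{0}\;=\;\frac{1}{\norm{v}\prod_{k=0}^{K-2}\norm{M^{(k)}}_F}\sum_{p=0}^{P_0-1} w_{p,0}\,\ket{p}^{p_0}\ket{0}^{p_1+\cdots+p_{K-1}}+\ket{\bot}.
\]
By \Cref{def:chain}, $w_{p,0}=\sum_{r_0,\ldots,r_{K-2}} m^{(0)}_{p,r_0}\cdots m^{(K-2)}_{r_{K-3},r_{K-2}}v_{r_{K-2}}=(M^{(0)}\cdots M^{(K-2)}v)_p$. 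Hence the $p_0$-qubit register, conditioned on all shared registers being $\ket{0}$, carries the vector $w$ up to the scalar $\mathcal{N}$.

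The step needing care is to check that the circuit of \Cref{def:skyscraper-ops} stays well defined when a register has zero qubits. This splits by the parity of $K$.

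\begin{itemize}
\item \textbf{$K$ odd.} The last matrix is a left matrix on the pair $(l_{(K-1)/2},r_{(K-1)/2})$ with $r_{(K-1)/2}=0$. So $\SP(v)$ is ordinary vector state preparation on $p_{K-1}$ qubits, and $\Id_{R_{(K-1)/2}}$ is trivial.
\item \textbf{$K$ even.} $\MSP(M^{(K-1)})$ acts on $(l_{K/2},r_{K/2-1})$ with $l_{K/2}$ empty. By \Cref{def:MSP} with $r=0$, it maps $\ket{i}^{p_{K-1}}\mapsto \frac{v_i}{\norm{v}}\ket{0}+\ket{\bot}$, which is exactly the collapse property \eqref{eq:sp-collapse} of $\SP(v)^\dagger$. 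Equivalently, in the decomposition $V^\dagger U$ the row loader $U(v)$ acts on zero qubits and is trivial, so $\MSP(v)=V(v)^\dagger=\SP(v)^\dagger$, consistent with \Cref{fig:circuit-vector}.
\end{itemize}

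In either case the proofs of \Cref{thm:main} for odd and even $K$ go through verbatim.

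The resource counts follow directly. Depth is inherited from \Cref{thm:main}, since each operator's cost is $\mathrm{polylog}(P_kP_{k+1})$ and here $P_K=1$. The qubit count is $\sum_{k=0}^{K}\log P_k=\sum_{k=0}^{K-1}\log P_k$ because $\log P_K=0$, and this gives $\Theta$ as in \Cref{subsec:complexity-analysis}.

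The main obstacle is the degenerate zero-qubit register, i.e.\ confirming that the even-$K$ $\MSP$ reduction to $\SP(v)^\dagger$ is legitimate. I would settle it by the direct computation from \Cref{def:MSP} above rather than through the $U,V$ decomposition.
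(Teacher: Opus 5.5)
Your proposal is correct and takes the same route as the paper. The paper proves the corollary in one line by instantiating \Cref{thm:main} with $P_K=1$ and letting the zero-qubit register drop out of the allocation and the output state. Your parity-wise check that $\SP(v)$ (odd $K$) and $\MSP(v)$ (even $K$) stay well defined, via the direct computation from \Cref{def:MSP}, makes explicit what the paper leaves implicit, and it agrees with the paper's remark that $\MSP(v)$ reduces to $\SP(v)^\dagger$.

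One small correction to your ``equivalently'' aside: $U(v)$ is not trivial when $v$ has negative entries. Loading the one-entry normalised row $v_i/|v_i|$ applies the sign $\mathrm{sign}(v_i)$ on $\ket{i}$, while $V(v)$ only prepares the magnitudes $|v_i|/\norm{v}$. So $\MSP(v)\neq V(v)^\dagger$ in general. The identity $\MSP(v)=V(v)^\dagger U(v)=\SP(v)^\dagger$ still holds, because that diagonal sign operator is self-adjoint. Your argument does not depend on this aside, since you settle the point through the direct computation.
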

\begin{proof}
  Immediate from \Cref{thm:main} with $P_K = 1$: the
  $\log_2 P_K= 0$ qubits vanish from both the register allocation
  and the output state, and all remaining bounds carry over unchanged.
\end{proof}

We validate this specialisation at machine precision using open-source Qiskit and QCLAB implementations$^1$, confirming that the circuit correctly produces the chain-vector product for all tested instances. \Cref{fig:circuit-vector} illustrates the case $K=4$ with $w = ABCv$: the column-index register is no longer needed and $\MSP(v)$ reduces to $\SP(v)^\dagger$ on the boundary register alone, so the product vector is encoded entirely in the bottom register.

\section{Conclusions and Future Work}
\label{sec:conclusions}
We presented the Two-Tower Matrix Multiplication, a quantum subroutine that encodes $\mathcal{W} = M^{(0)}\cdots M^{(K-1)}$, $M^{(k)}\in \mathbb{R}^{P_k\times P_{k+1}}$, using $\Theta\bigl(\sum_{k=0}^{K} \log P_k \bigr)$ qubits and circuit depth $\mathcal{O}(\max_k \mathrm{polylog}(P_k P_{k+1}))$, independent of $K$. The construction generalises the two-matrix operators of~\cite{bernasconiMatMul} to arbitrary chain length by interleaving $\SP$ and $\SP^\dagger$ gates on disjoint registers; the $\SP^\dagger$ contraction drives all shared registers to $\ket{0}$, producing the chain product as amplitudes.
Compared with classical algorithms, the Two-Tower trades sequential $\mathcal{O}(K \cdot N^\omega)$ depth --- or $\mathcal{O}(\log K \cdot \log N)$ with unbounded parallelism --- for $\mathcal{O}(\mathrm{polylog}(N))$ quantum depth. The advantage materialises when the downstream computation requires only aggregate quantities of $\mathcal{W}$ extractable via Amplitude Estimation. Representative instances include: norm, trace, and variance estimation via Amplitude Estimation~\cite{antonioli2025outlier,bernasconi2024quantum,cade2017quantum,poggiali2023quantum};
graph-matrix powers $A^K$ for spectral analysis, walk counting, and
mixing estimation~\cite{nghiem2023quantum,janzing610235bqp};
linear systems with chain-structured operators via
HHL~\cite{harrow2009quantum} or Neumann-series
LCU~\cite{chakraborty_et_al:LIPIcs.ICALP.2019.33};
and kernel evaluation in quantum machine
learning~\cite{biamonte2017quantum,kerenidis_et_al:LIPIcs.ITCS.2017.49}.
The chain-vector specialisation opens connections to iterative
numerical methods (including the power method ($A^t v$), Krylov
subspace construction $\{v, Av, \ldots, A^{t-1}v\}$, iterative
refinement via Neumann series, and polynomial matrix evaluation
$p(A)v$) where the Two-Tower provides a depth-efficient
building block for each matrix-vector product in the iteration.

Several directions for future work emerge from the present
construction.

\begin{itemize}
\item \emph{Trading depth for width.}
As shown in \Cref{subsec:normalisation-signal}, the signal weight $\Omega_{\mathcal{W}}$ can decrease with the chain length $K$. A natural mitigation is a {hybrid strategy} that alternates between the Two-Tower and block-encoding: the Two-Tower computes partial chain products in constant depth at the cost of growing qubit count and decaying signal weight, while block-encoding stages absorb the intermediate results and continue the multiplication in depth rather than width. This approach trades qubit overhead for circuit depth within the same computation, but it requires an efficient conversion between statevector encoding and block encoding. Such a conversion is particularly appealing beyond its role in the hybrid strategy itself: by lifting an intermediate statevector into an operator, i.e. into a block encoding, one could subsequently act on it with the Quantum Singular Value Transformation (QSVT), thereby inheriting the family of algorithms and applications already developed for block-encoded operators. We leave the design of such a conversion procedure, and the resulting extension of the hybrid strategy, as an open problem for future investigation.

\item \emph{Complex matrices.}
The current construction assumes real entries. Extending to complex-valued matrices requires loading the entry-wise conjugate $\overline{M^{(k)}}$ in place of $M^{(k)}$ within $\OP(R)$, so that the contraction mechanism produces the correct Hermitian inner products; the rest of the circuit remains unchanged. A formal proof and numerical validation are left for future work.

\item \emph{Unitary chain encoding.}
An interesting generalisation replaces the matrices $M^{(0)}, M^{(1)},\ldots,$ $ M^{(K-1)}$ with arbitrary unitary operators $U_0, \ldots, U_{K-1}$, encoding their product as a quantum state. Each $U_k$ would enter the circuit through a state-preparation oracle that places the vectorized unitary in its first column. If the contraction mechanism extends to this setting, the Two-Tower could compute transition amplitudes $\bra{\phi} U_0 \cdots U_{K-1} \ket{v}$ at depth independent of $K$, with direct applications in quantum simulation and process tomography. We leave the formal analysis of this direction for future work.
\end{itemize}

\section*{Acknowledgements}
This study was carried out within the National Centre on HPC, Big Data and Quantum Computing - SPOKE 10 (Quantum Computing) and received funding from the European Union Next-GenerationEU - National Recovery and Resilience Plan (NRRP) – MISSION 4 COMPONENT 2, INVESTMENT N. 1.4 – CUP N. I53C22000690001. 
Partial support was also provided by the INdAM - GNCS project ``Algebra Lineare Quantistica, State Preparation e Compilazione di Circuiti Quantistici'', CUP E53C25002010001, and by the Italian Project Fondo Italiano per la Scienza FIS00001966 ``MIMOSA''.
G. Del Corso is also partially supported by European Union - NextGenerationEU under the National Recovery and Resilience Plan (PNRR) - Mission 4 Education and research - Component 2 From research to business - Investment 1.1 Notice Prin 2022 - DD N. 104 2/2/2022, titled Low-rank Structures and Numerical Methods in Matrix and Tensor Computations and their Application, proposal code 20227PCCKZ – CUP I53D23002280006.J53D23003620006 and by the U.S. Department of Energy, Office of Science, National Quantum Information Science Research Centers, Superconducting Quantum Materials and Systems Center (SQMS) under Contract No. 89243024CSC000002.

\section*{Declaration of generative AI and AI-assisted technologies in the manuscript preparation process.}
During the preparation of this work, the authors used Claude Opus 4.6  for  grammar, spelling check and help generating part of the circuits drawing. After using these tool, the authors reviewed and edited the content as needed and take full responsibility for the publication’s content. 
\bibliographystyle{elsarticle-num}
\bibliography{main}

@misc{qiskit2024,
      title={Quantum computing with {Q}iskit},
      author={A. Javadi-Abhari and M. Treinish and K. Krsulich and C. J. Wood and J. Lishman and J. Gacon and S. Martiel and  P. D. Nation and L. S. Bishop and A. W. Cross and B. R. Johnson and J. M. Gambetta},
      year={2024},
      doi={10.48550/arXiv.2405.08810},
      eprint={2405.08810},
      archivePrefix={arXiv},
      primaryClass={quant-ph}
}

@article{brassard2002quantum,
  title     = {Quantum amplitude amplification and estimation},
  author    = {Brassard, G. and H{\o}yer, P. and Mosca, M. and Tapp, A.},
  journal   = {Contemporary Mathematics},
  volume    = {305},
  pages     = {53--74},
  year      = {2002},
  publisher = {American Mathematical Society},
  note      = {arXiv:quant-ph/0005055}
}

@inproceedings{poggiali2023quantum,
  title     = {Quantum Feature Selection with Variance Estimation},
  author    = {Poggiali, A. and Bernasconi, A. and Berti, A. and {Del Corso}, G. M. and Guidotti, R. and others},
  booktitle = {ESANN},
  year      = {2023}
}

@inproceedings{montanaro2024quantum,
  title     = {Quantum and classical query complexities of functions of matrices},
  author    = {Montanaro, A. and Shao, C.},
  booktitle = {Proceedings of the 56th Annual ACM Symposium on Theory of Computing},
  pages     = {573--584},
  year      = {2024}
}

@article{li2025faster,
  title     = {Faster quantum subroutine for matrix chain multiplication via {Chebyshev} approximation},
  author    = {Li, X. and Zheng, P.-L. and Pan, C. and Wang, F. and Cui, C. and Lu, X.},
  journal   = {Scientific Reports},
  volume    = {15},
  number    = {1},
  pages     = {28559},
  year      = {2025},
  publisher = {Nature Publishing Group UK London}
}

@article{nghiem2023quantum,
  title     = {Quantum algorithm for estimating largest eigenvalues},
  author    = {Nghiem, N. A. and Wei, T.-C.},
  journal   = {Physics Letters A},
  volume    = {488},
  pages     = {129138},
  year      = {2023},
  publisher = {Elsevier}
}

@article{janzing610235bqp,
  title   = {{BQP}-complete problems concerning mixing properties of classical random walks on sparse graphs},
  author  = {Janzing, D. and Wocjan, P.},
  journal = {arXiv preprint quant-ph/0610235},
  year    = {2006}
}

@article{cade2017quantum,
  title   = {The quantum complexity of computing {Schatten} $p$-norms},
  author  = {Cade, C. and Montanaro, A.},
  journal = {arXiv preprint arXiv:1706.09279},
  year    = {2017}
}

@article{berti2025efficient,
  title   = {Efficient quantum state preparation with bucket brigade {QRAM}},
  author  = {Berti, A. and Ghisoni, F.},
  journal = {arXiv preprint arXiv:2510.16149},
  year    = {2025}
}

@inproceedings{antonioli2025outlier,
  title        = {Outlier Detection and other applications of Quantum Matrix Multiplication},
  author       = {Antonioli, G. and Berti, A. and Poggiali, A. and Bernasconi, A. and {Del Corso}, G. M.},
  booktitle    = {2025 IEEE International Parallel and Distributed Processing Symposium Workshops (IPDPSW)},
  pages        = {509--518},
  year         = {2025},
  organization = {IEEE}
}

@book{nielsen2010quantum,
  title     = {Quantum Computation and Quantum Information},
  author    = {Nielsen, M. A. and Chuang, I. L.},
  year      = {2010},
  publisher = {Cambridge University Press}
}

@article{bernasconiMatMul,
  title   = {Quantum Subroutine for Efficient Matrix Multiplication},
  author  = {Bernasconi, A. and Berti, A. and {Del Corso}, G. M. and Poggiali, A.},
  journal = {IEEE Access},
  volume  = {12},
  pages   = {116274--116284},
  year    = {2024},
  doi     = {10.1109/ACCESS.2024.3446176}
}

@article{bernasconi2024quantum,
  title     = {Quantum subroutine for variance estimation: algorithmic design and applications},
  author    = {Bernasconi, A. and Berti, A. and {Del Corso}, G. M. and Guidotti, R. and Poggiali, A.},
  journal   = {Quantum Machine Intelligence},
  volume    = {6},
  number    = {2},
  pages     = {78},
  year      = {2024},
  publisher = {Springer}
}

@article{biamonte2017quantum,
  title     = {Quantum machine learning},
  author    = {Biamonte, J. and Wittek, P. and Pancotti, N. and Rebentrost, P. and Wiebe, N. and Lloyd, S.},
  journal   = {Nature},
  volume    = {549},
  number    = {7671},
  pages     = {195--202},
  year      = {2017},
  publisher = {Nature Publishing Group}
}

@inproceedings{chakraborty_et_al:LIPIcs.ICALP.2019.33,
  title     = {{The Power of Block-Encoded Matrix Powers: Improved Regression Techniques via Faster Hamiltonian Simulation}},
  author    = {Chakraborty, S. and Gily\'{e}n, A. and Jeffery, S.},
  booktitle = {46th International Colloquium on Automata, Languages, and Programming (ICALP 2019)},
  series    = {Leibniz International Proceedings in Informatics (LIPIcs)},
  volume    = {132},
  pages     = {33:1--33:14},
  year      = {2019},
  publisher = {Schloss Dagstuhl -- Leibniz-Zentrum f{\"u}r Informatik},
  doi       = {10.4230/LIPIcs.ICALP.2019.33}
}

@inproceedings{kerenidis_et_al:LIPIcs.ITCS.2017.49,
  title     = {{Quantum Recommendation Systems}},
  author    = {Kerenidis, I. and Prakash, A.},
  booktitle = {8th Innovations in Theoretical Computer Science Conference (ITCS 2017)},
  series    = {Leibniz International Proceedings in Informatics (LIPIcs)},
  volume    = {67},
  pages     = {49:1--49:21},
  year      = {2017},
  publisher = {Schloss Dagstuhl -- Leibniz-Zentrum f{\"u}r Informatik},
  doi       = {10.4230/LIPIcs.ITCS.2017.49}
}

@article{PhysRevLett.120.050502,
  title     = {Quantum Linear System Algorithm for Dense Matrices},
  author    = {Wossnig, L. and Zhao, Z. and Prakash, A.},
  journal   = {Physical Review Letters},
  volume    = {120},
  number    = {5},
  pages     = {050502},
  year      = {2018},
  publisher = {American Physical Society},
  doi       = {10.1103/PhysRevLett.120.050502}
}

@article{harrow2009quantum,
  title     = {Quantum algorithm for linear systems of equations},
  author    = {Harrow, A. W. and Hassidim, A. and Lloyd, S.},
  journal   = {Physical Review Letters},
  volume    = {103},
  number    = {15},
  pages     = {150502},
  year      = {2009},
  publisher = {APS}
}

@inproceedings{alman2021refined,
  title        = {A refined laser method and faster matrix multiplication},
  author       = {Alman, J. and Williams, V. V.},
  booktitle   = {Proceedings of the 2021 ACM-SIAM Symposium on Discrete Algorithms (SODA)},
  pages        = {522--539},
  year         = {2021},
  organization = {SIAM}
}

@article{giovannetti2008quantum,
  title     = {Quantum random access memory},
  author    = {Giovannetti, V. and Lloyd, S. and Maccone, L.},
  journal   = {Physical Review Letters},
  volume    = {100},
  number    = {16},
  pages     = {160501},
  year      = {2008},
  publisher = {APS}
}

@inproceedings{duan2023faster,
  title        = {Faster matrix multiplication via asymmetric hashing},
  author       = {Duan, R. and Wu, H. and Zhou, R.},
  booktitle    = {Proceedings of the 64th Annual IEEE Symposium on Foundations of Computer Science (FOCS)},
  pages        = {2129--2138},
  year         = {2023},
  organization = {IEEE}
}

@book{horn2012matrix,
  title={Matrix analysis},
  author={Horn, R. A and Johnson, C. R.},
  year={2012},
  publisher={Cambridge university press}
}

@article{Fang_2023,
   title={Time-marching based quantum solvers for time-dependent linear differential equations},
   volume={7},
   ISSN={2521-327X},
   url={http://dx.doi.org/10.22331/q-2023-03-20-955},
   DOI={10.22331/q-2023-03-20-955},
   journal={Quantum},
   publisher={Verein zur Forderung des Open Access Publizierens in den Quantenwissenschaften},
   author={Fang, D. and Lin, L. and Tong, Y.},
   year={2023},
   month=mar, pages={955} }

@article{Bini1979,
  title={$O(n^2.7799)$ Complexity for $n\times n$ Approximate Matrix Multiplication},
  author={ D. Bini and M.  Capovani and F. Romani  and G.
  Lotti },
  journal={Information Processing Letters},
  year={1979},
  volume={8},
  pages={234-235},
  doi={10.1016/0020-0190(79)90113-3}
}

@inproceedings{keip2025qclab,
  title={{QCLAB}: A Matlab Toolbox for Quantum Computing},
  author={S. Keip and Camps, D. and Van Beeumen, R.},
  booktitle={2025 IEEE International Parallel and Distributed Processing Symposium Workshops (IPDPSW)},
  pages={1175--1181},
  year={2025},
  organization={IEEE}
}

@inproceedings{gilyen2019quantum,
  author    = {Gily{\'{e}}n, A. and
               Su, Y. and
               Low, G. H. and
               Wiebe, N.},
  title     = {Quantum Singular Value Transformation and Beyond:
               Exponential Improvements for Quantum Matrix Arithmetics},
  booktitle = {Proceedings of the 51st Annual {ACM} {SIGACT} Symposium
               on Theory of Computing ({STOC} 2019)},
  pages     = {193--204},
  publisher = {{ACM}},
  year      = {2019},
  doi       = {10.1145/3313276.3316366},
 }

@Book{GolubVanLoan1996,
  author = "G. H. Golub and C. F. {Van~Loan}",
  title = "Matrix computations",
  publisher = "Johns Hopkins University Press",
  year = "1996",
  edition = "Third",
  address = "Baltimore, MD",
  ISBN = "0-8018-5414-8"
}

@article{poggiali2026more,
  title={A more efficient quantum circuit for estimating the variance},
  author={Poggiali, A. and Ju, J.},
  journal={Quantum Machine Intelligence},
  volume={8},
  number={1},
  pages={34},
  year={2026},
  publisher={Springer}
}

\end{document}